\newtheorem{fact}{Fact}
\newcommand{\ProofEx}[1]{\text{[ #1 ]}}
\newcommand{\Powerset}{\mathcal{P}}
\newcommand{\Naturals}{\mathbb{N}}
\newcommand{\Pds}{\mathcal{P}}
\newcommand{\ControlLocations}{P}
\newcommand{\StackAlphabet}{\Gamma}
\newcommand{\StackAlphabetS}{\StackAlphabet^{*}}
\newcommand{\PdsRules}{\Delta}
\newcommand{\PdsRulesPre}{\PdsRules_{\textit{pre}}}
\newcommand{\PdsRulesPost}{\PdsRules_{\textit{post}}}
\newcommand{\PdsRulesPostP}{\PdsRules_{\textit{post-2}}}
\newcommand{\PdsRule}{\hookrightarrow}
\newcommand{\PdsTrans}[1]{\overset{#1}{\Longrightarrow}}
\newcommand{\PdsTransS}[1]{\overset{#1}{\Longrightarrow}{}\negthickspace^{*}}
\newcommand{\PreStar}{\MboxIt{Pre}^{*}}
\newcommand{\PostStar}{\MboxIt{Post}^{*}}
\newcommand{\APreStar}{\Automaton_{pre^{*}}}
\newcommand{\APreStarC}{\Automaton_{pre^{*}}^{\Constraints}}
\newcommand{\APostStar}{\Automaton_{post^{*}}}
\newcommand{\APostStarC}{\Automaton_{post^{*}}^{\Constraints}}
\newcommand{\Lhs}{\textit{lhs}}
\newcommand{\Set}[1]{\{ #1 \}}
\newcommand{\Ang}[1]{\langle #1 \rangle}
\newcommand{\PdsA}{\mathcal{PA}}
\newcommand{\ArPds}{\mathcal{A}^{R}\mathcal{P}}
\newcommand{\WPds}{\mathcal{W}}
\newcommand{\Automaton}{\mathcal{A}}
\newcommand{\AutomatonC}{\Automaton^{\Constraints}}
\newcommand{\States}{Q}
\newcommand{\Transitions}{\delta}
\newcommand{\FinalStates}{F}
\newcommand{\Constraints}{\mathcal{C}}
\newcommand{\AutoTransSet}{\boldsymbol{\rightarrow}}
\newcommand{\AutoTrans}[1]{\xrightarrow{#1}}
\newcommand{\AutoTransS}[1]{\AutoTrans{#1}{}\negthickspace^{*}\,}
\newcommand{\AutoTransR}[1]{\xleftarrow{#1}}
\newcommand{\AutoTransRS}[1]{\,{}^{*}\negthickspace\AutoTransR{#1}}
\newcommand{\AutoTransRE}[1]{\overset{#1}{\dashleftarrow}}
\newcommand{\PreConSymb}{l}
\newcommand{\PreCon}[3]{\PreConSymb(#1 \AutoTrans{#2} #3)}
\newcommand{\PreSolSymb}{\lambda}
\newcommand{\PreSolSymbP}{\PreSolSymb^{*}}
\newcommand{\PreSol}[3]{\PreSolSymb(#1 \AutoTrans{#2} #3)}
\newcommand{\PreSolC}[4]{\PreSolSymb#1(#2 \AutoTrans{#3} #4)}
\newcommand{\PreSolP}[4]{\PreSolSymb^{*}(#1 \underset{#3}{\AutoTrans{#2}}{}^{*}\; #4)}
\newcommand{\PostConSymb}{h}
\newcommand{\PostConSymbE}{\PostConSymb^\epsilon}
\newcommand{\PostCon}[3]{\PostConSymb(#3 \AutoTransR{#2} #1)}
\newcommand{\PostConPE}[4]{\PostConSymbE(#4 \underset{#3}{\AutoTransRE{#2}} #1)}
\newcommand{\PostSolSymb}{\lambda}
\newcommand{\PostSolSymbP}{\PostSolSymb_{R}^{*}}
\newcommand{\PostSol}[3]{\PostSolSymb(#3 \AutoTransR{#2} #1)}
\newcommand{\PostSolP}[4]{\PostSolSymb_{R}^{*}(#4 \;^{*}\underset{#3}{\AutoTransR{#2}} #1)}
\newcommand{\PostSolC}[4]{\PostSolSymb#1(#4 \AutoTransR{#3} #2)}
\newcommand{\Semiring}{\mathcal{S}}
\newcommand{\FlowAlgebra}{\mathcal{F}}
\newcommand{\Extend}{\otimes}
\newcommand{\Combine}{\oplus}
\newcommand{\BigCombine}{\bigoplus}
\newcommand{\One}{\bar{1}}
\newcommand{\Zero}{\bar{0}}
\newcommand{\RevComp}{\fatsemi}
\newcommand{\Id}{\textit{id}}
\newcommand{\Leq}{\sqsubseteq}
\newcommand{\BigLub}{\bigsqcup}
\newcommand{\FunSpace}{\mathcal{F}}
\newcommand{\BigFunLub}{\stackrel{\circ}{\BigLub}}
\newcommand{\MboxIt}[1]{\textit{#1}}
\begin{document}

\title{Pushdown Systems for Monotone Frameworks\thanks{The research presented in
    this paper has been supported by MT-LAB, a VKR Centre of Excellence for the
  Modelling of Information Technology.}
}

\author{Micha{\l} Terepeta
   \and Hanne Riis Nielson
   \and Flemming Nielson
   }
\institute{Technical University of Denmark \\
           \email{\{mtte,riis,nielson\}@imm.dtu.dk}}

\maketitle

\begin{abstract}
  \emph{Monotone frameworks} is one of the most successful frameworks for
  intraprocedural data flow analysis extending the traditional class of
  \emph{bitvector frameworks} (like live variables and available expressions).
  \emph{Weighted pushdown systems} is similarly one of the most general
  frameworks for interprocedural analysis of programs.
  However, it makes use of \emph{idempotent semirings} to represent the sets of
  properties and unfortunately they do not admit analyses whose transfer
  functions are not strict (e.g., classical bitvector frameworks).

  This motivates the development of algorithms for backward and forward
  reachability of pushdown systems using sets of properties forming so-called
  \emph{flow algebras} that weaken some of the assumptions of idempotent
  semirings.
  In particular they do admit the bitvector frameworks, monotone frameworks, as
  well as idempotent semirings.
  We show that the algorithms are \emph{sound} under mild assumptions on the
  flow algebras, mainly that the set of properties constitutes a join
  semi-lattice, and \emph{complete} provided that the transfer functions are
  suitably distributive (but not necessarily strict).
\end{abstract}

%
%

\section{Introduction}
\label{sec:introduction}
\emph{Monotone frameworks}~\cite{bib:monotone-frameworks} is a unifying approach
to static analysis of programs.
It creates a generic foundation for specifying various analyses and by imposing
very modest requirements can accommodate a wide range of analyses, including the
bitvector frameworks as well as more complex ones such as constant propagation.
However, the original formulation was focused on the intraprocedural setting and
did not discuss the interprocedural one.

Interprocedural analysis has always been an interesting challenge for static
analysis.
Two of the main reasons for that are the unbounded stack and recursive (or
mutually recursive) procedures.
Moreover, only some paths in the interprocedural flow graph are valid --- the
call and returns should match.
All of this opens up many possibilities for various trade-offs, such as taking
into account or ignoring the calling context.
In their seminal work Sharir and Pnueli~\cite{bib:sharir-pnueli} presented two
approaches allowing for precise interprocedural analysis.
One of them, known as the \emph{call-strings} approach, is based on ``tagging''
the analysis information with the current call stack.
Obviously the length of call-strings should be limited to some threshold in
order to ensure the termination of the analysis.
However, in this paper we will be more interested in the other presented approach.
It is called the \emph{functional} approach and is based on the idea of
computing the summarizations of procedures, i.e., establishing the relationships
between the inputs and the outputs of the blocks of the program and procedures
(composing the results for the blocks).
A similar idea, from the abstract interpretation perspective, was explored
in~\cite{bib:cousot-recursive}, which considered predicate transformers as the
basis for the analysis and also involved constructing systems of functional
equations.

\emph{Pushdown systems}~\cite{bib:pushdown-reachability,bib:efficient-pushdown,bib:schwoon-thesis}
are one of the more recently proposed approaches to interprocedural analysis.
One of the underlying ideas behind them is to use a construction similar to
pushdown automata to model the use of the stack by a program.
An interesting advantage of the approach is the ability to compute the
(possibly) infinite sets of predecessor and successor configurations for a given
program and some initial configurations.
Since the pushdown systems can only handle programs with finite abstractions,
they have been extended with semiring weights/annotations in \emph{weighted}
pushdown systems~\cite{bib:wpds1,bib:wpds2,bib:program-analysis-wpds} and
\emph{communicating} pushdown systems~\cite{bib:cpds1,bib:cpds2}.
The extensions proposed in both of these approaches are actually very close,
although the former focuses on dataflow analysis and generalizing the functional
approach to interprocedural analysis, while the latter on the abstractions of
language generated by synchronization actions in a concurrent setting.
Pushdown systems have been used for verification purposes in many different
projects and contexts.
The examples include the Moped~\cite{bib:schwoon-thesis} and
jMoped~\cite{bib:jmoped} model checkers that extensively use pushdown systems or
Codesurfer~\cite{bib:codesurfer} that takes advantage of weighted pushdown
systems.

However, both the WPDS and CPDS use semiring structure for analysis purposes and
therefore exclude many classical approaches, such as bitvector frameworks where
the transfer functions are not strict.
In this paper we are bringing the pushdown systems based analysis closer to the
monotone frameworks.
To achieve that we use the concept of \emph{flow algebra}~\cite{bib:reykjavik}
that is a structure similar to semiring, but more permissive.
In particular we do not impose the annihilation requirement, nor the
distributivity.
This allows us to present examples of classical analyses that thanks to our
extensions are admitted by the framework, and did not directly fit into the
previous semiring-based approaches.\footnote{Although it is possible to sidestep
this problem by introducing an ``artificial'' annihilator to the semiring.}
Since the existing algorithms are based on the assumption of working with
semiring structure, we develop our slightly different algorithms that allow us
to relax the requirements.
Then we go on to establish the soundness result, i.e., the analysis result
safely over-approximates the join over all valid paths of the pushdown system.
Furthermore, we also prove the completeness of the analysis, that is, provided
that the flow algebra satisfies certain additional properties the result of the
analysis will coincide with the join over all valid paths.

The structure of the paper is as follows.
In Sec.~\ref{sec:monotone-fa} we recall and introduce the necessary concepts,
e.g., monotone frameworks, pushdown systems including both the weighted and
communicating variants.
Then in Sec.~\ref{sec:pushdown-fa} we present basic definitions, while in
Sec.~\ref{sec:algorithms} we describe our algorithms and provide some intuition
behind them.
Then Sec.~\ref{sec:soundness} presents the soundness result for both the forward
and backward reachability.
Similarly Sec.~\ref{sec:completeness} describes the completeness result for both
of them.
Finally, we discuss the results and provide some examples in
Sec.~\ref{sec:discussion} and conclude in Sec.~\ref{sec:conclusions}.

\section{Monotone Frameworks, Semirings and Flow Algebras}
\label{sec:monotone-fa}
In this section we will present the basic definitions that will be used
throughout the rest of the paper.
We will start with recalling the classical approach to static analysis known as
monotone frameworks~\cite{bib:monotone-frameworks,bib:ppa}.
Here we present a slightly more convenient (in the context of this paper)
definition of monotone framework.
\begin{definition}
  \label{def:monotone-framework}
  A complete monotone framework is a tuple
  \[
    (L, \BigLub, \FunSpace, \circ, \Id, (f_l)_{l \in L} )
  \]
  where $L$ is a complete lattice, $\BigLub$ is its least upper bound operator.
  We use $\FunSpace$ to denote a monotone function space on $L$, i.e., a set of
  monotone functions that contains the identity function and is closed under
  function composition.
  Finally, $\circ$ is function composition, $\Id$ is the identity function and
  $f_l = \lambda l' . l$ for every $l \in L$.
\end{definition}
We will also discuss bitvector frameworks, which are a special case of monotone
frameworks.
The lattice used is $L = \Powerset(D)$ for some finite set $D$, the ordering is
either $\subseteq$ or $\supseteq$ and the least upper bound is either $\cup$ or
$\cap$ and the monotone and distributive function space is defined as
\[
  \Set{ f : \Powerset(D) \rightarrow \Powerset(D) \mid
        \exists Y_f^1, Y_f^2 \subseteq D : \forall Y \subseteq D :
        f(Y) = (Y \cap Y_f^1) \cup Y_f^2
  }
\]
One of the main reasons for distinguishing them is the fact that they can be
implemented very efficiently using bitvectors and include common analyses such
as live variables, available expressions, reaching definitions, etc.

Since both weighted and communicating pushdown systems are using semirings, we
will introduce some of the basic definitions associated with
them~\cite{bib:droste-kuich}, starting with the definition of a monoid.
\begin{definition}
  \label{def:monoid}
  A monoid is a tuple $(M, \Extend, \One)$ such that $M$ is non-empty, $\Extend$
  is an associative operator on $M$ and $\One$ is a neutral element for
  $\Extend$, i.e.,
  \[
    \forall a \in M : a \Extend \One = \One \Extend a = a
  \]
\end{definition}
A monoid is \emph{idempotent} if $\Extend$ operator is idempotent, that is
\[
  \forall a \in M : a \Extend a = a
\]
Similarly it is \emph{commutative} if the operator is commutative, in which case
we usually use the symbol $\Combine$ to denote it (and also use $\Zero$ for the
neutral element).
\[
  \forall a, b \in M : a \Combine b = b \Combine a
\]
A commutative monoid $(M, \Combine, \Zero)$ is \emph{naturally ordered} if the
relation defined as
\[
  \forall a, b \in M : a \Leq b \iff \exists c \in M : a \Combine c = b
\]
is a partial order.
Moreover, if the monoid is idempotent then it is naturally ordered and we have
that
\[
  \forall a, b \in M : a \Leq b \iff a \Combine b = b
\]
and $\Combine$ is the least upper bound operator.
Note that this corresponds to a join semi-lattice.

Now we are ready do define the semiring structure.
\begin{definition}
  \label{def:semiring}
  A semiring is a tuple $(S, \Combine, \Extend, \Zero, \One)$ such that
  \begin{itemize}
    \item $(S, \Combine, \Zero)$ is a commutative monoid (hence $\Zero$ is a
      neutral element for $\Combine$)
    \item $(S, \Extend, \One)$ is a monoid (hence $\One$ is a
      neutral element for $\Extend$)
    \item $\Extend$ distributes over $\Combine$, that is
      \begin{align*}
        a \Extend (b \Combine c) & = (a \Extend b) \Combine (a \Extend c)
        \\
        (a \Combine b) \Extend c & = (a \Extend c) \Combine (b \Extend c)
      \end{align*}
    \item $\Zero$ is an annihilator for $\Extend$, that is
      $a \Extend \Zero = \Zero \Extend a = \Zero$
  \end{itemize}
\end{definition}
Similarly to the above, we call a semiring idempotent if $\Combine$ is
idempotent, and commutative if $\Extend$ is commutative.
The ordering for idempotent semiring is defined in the same way as for
idempotent and commutative monoids, with the additional requirement that
$\Extend$ preserves the order (i.e., is monotonic).

As already mentioned we will use the notion of a flow
algebra~\cite{bib:reykjavik}, which is similar to idempotent semirings, but less
restrictive.\footnote{The name comes from the idea of performing
data\textbf{flow} analyses using an \textbf{algebra}ic structure.}
The main difference is that flow algebras do not require the distributivity and
annihilation properties.
Instead we replace the first one with a monotonicity requirement and dispense
with the second one.
It is formally defined as follows.
\begin{definition}
  \label{def:flow-algebra}
  A flow algebra is a structure of the form
  $(F, \Combine, \Extend, \Zero, \One)$ such that:
  \begin{itemize}
    \item $(F, \Combine, \Zero)$ is an idempotent and commutative monoid
    \item $(F,\Extend,\One)$ is a monoid
    \item $\Extend$ is monotonic in both arguments, that is:
      \begin{align*}
        f_1 \Leq f_2 \ &\Rightarrow\ f_1\Extend f\Leq f_2\Extend f \\
        f_1 \Leq f_2 \ &\Rightarrow\ f\Extend f_1\Leq f\Extend f_2
      \end{align*}
  \end{itemize}
  where $ f_1 \Leq f_2 $ if and only if $ f_1 \Combine f_2 = f_2 $.
\end{definition}
Clearly in a flow algebra all finite subsets $ \Set{ f_1, \cdots, f_n } $
have a least upper bound, which is given by
$ \Zero \Combine f_1 \Combine \cdots \Combine f_n $.

Since the assumptions on a flow algebra are less demanding than in the case of
semirings, we additionally introduce the notions of distributive and strict flow
algebras.
\begin{definition}
  \label{def:ditributive-strict-flow-algebra}
  A distributive flow algebra is a flow algebra
  $ (F, \Combine, \Extend, \Zero, \One) $,
  where $\Extend$ distributes over $\Combine$ on both sides, i.e.,
  \begin{align*}
    f_1 \Extend (f_2 \Combine f_3) & = (f_1 \Extend f_2) \Combine (f_1 \Extend f_3)
    \\
    (f_1 \Combine f_2) \Extend f_3 & = (f_1 \Extend f_3) \Combine (f_2 \Extend f_3)
  \end{align*}
  We also say that a flow algebra is strict if
  \[
    \Zero \Extend f = \Zero = f \Extend \Zero
  \]
\end{definition}
\begin{fact}
  Every idempotent semiring is a strict and distributive flow algebra.
\end{fact}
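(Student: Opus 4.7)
The plan is to verify each clause of Definition~\ref{def:flow-algebra} (and of Definition~\ref{def:ditributive-strict-flow-algebra}) against the axioms of an idempotent semiring in Definition~\ref{def:semiring}. Most clauses match literally, so the only real work is deriving monotonicity of $\Extend$ from distributivity.

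First I would note that the semiring axioms directly give that $(S, \Combine, \Zero)$ is a commutative monoid and, together with idempotency of $\Combine$, yield the first bullet of the flow algebra definition. The second bullet, that $(S, \Extend, \One)$ is a monoid, is again immediate from the semiring definition. The distributivity clause of Definition~\ref{def:ditributive-strict-flow-algebra} is also literally one of the semiring axioms, as is strictness, which is precisely the annihilation property of $\Zero$ with respect to $\Extend$.

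The only clause that is not a direct transcription is the monotonicity of $\Extend$ in each argument. Here I would unfold the definition $f_1 \Leq f_2 \iff f_1 \Combine f_2 = f_2$ and compute, for an arbitrary $f$,
\[
  (f_1 \Extend f) \Combine (f_2 \Extend f) = (f_1 \Combine f_2) \Extend f = f_2 \Extend f,
\]
using right distributivity and the assumption $f_1 \Combine f_2 = f_2$. This establishes $f_1 \Extend f \Leq f_2 \Extend f$, and the symmetric argument using left distributivity gives monotonicity in the second argument.

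I do not anticipate any real obstacle: each condition follows either by inspection or by one line of calculation. The only subtlety worth flagging is to make explicit that the order $\Leq$ referenced in Definition~\ref{def:flow-algebra} coincides with the natural order on the idempotent commutative monoid $(S, \Combine, \Zero)$ discussed in the paragraph following Definition~\ref{def:monoid}, so that the monotonicity derivation above is indeed about the intended order.
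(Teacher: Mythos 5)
Your proposal is correct. The paper states this Fact without any accompanying proof, treating it as immediate from the definitions, and your clause-by-clause verification --- in particular the one-line derivation of monotonicity of $\Extend$ from distributivity via $(f_1 \Extend f) \Combine (f_2 \Extend f) = (f_1 \Combine f_2) \Extend f = f_2 \Extend f$ --- is precisely the routine check the paper leaves implicit, so there is nothing to add.
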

One of the motivations of flow algebras is that the classical bit-vector
frameworks~\cite{bib:ppa} are not strict; hence they are not directly
expressible using idempotent semirings.
Therefore, from this perspective the flow algebras are closer to Monotone
Frameworks, and other classical static analyses.
Restricting our attention to semirings rather than flow algebras would mean
restricting attention to strict and distributive frameworks.
\begin{definition}
  \label{def:reykjavki/complete-fa}
  A complete flow algebra is a flow algebra
  $ (F, \Combine,  \Extend, \Zero, \One) $, where $F$ is a complete lattice; we
  write $\BigCombine$ for the least upper bound.
  It is affine~\cite{bib:ppa} if for all non-empty subsets
  $ F' \not = \emptyset $ of $F$
  \begin{align*}
    f \Extend \BigCombine F' &= \BigCombine \{ f \Extend f' \mid f' \in F' \}
    \\
    \BigCombine F' \Extend f &= \BigCombine \{ f' \Extend f \mid f' \in F' \}
  \end{align*}
  Furthermore, it is completely distributive if it is affine and strict.
\end{definition}
If the complete flow algebra satisfies the ascending chain
condition~\cite{bib:ppa} then it is affine if and only if it is distributive.

Let us emphasize the connection between the flow algebras and the monotone
frameworks.
As defined above a complete monotone framework is
\[
  (L, \BigLub, \FunSpace, \circ, \Id, (f_l)_{l \in L} )
\]
Note that this immediately gives us a flow algebra by taking
\[
  (F, \BigFunLub, \RevComp, f_\bot, id)
\]
where $ \BigFunLub Y = \lambda l . \BigLub_{f \in Y} f(l) $ and
$ f \RevComp g = g \circ f $.

\section{Pushdown Systems}
\label{sec:pushdown-fa}
In order to present our results, it is necessary first to introduce some basic
definitions related to pushdown systems as well as weighted/communicating
pushdown systems.

%
%

\subsection{Introduction to Pushdown Systems}

We will start with recalling some of the basic definitions of pushdown systems
and their extensions with semiring weights, namely weighted pushdown systems
(WPDS)~\cite{bib:wpds1,bib:wpds2} and communicating pushdown systems
(CPDS)~\cite{bib:cpds1,bib:cpds2}.
We will mostly follow the notation used for WPDS (note that CPDS use slightly
different notation, but the intent is basically the same in both approaches ---
one equips every pushdown rule with a semiring weight).

\begin{definition}
  \label{def:inter/pds}
  A pushdown system is a tuple
  $ \Pds = (\ControlLocations, \StackAlphabet, \PdsRules) $
  where $P$ is a finite set of control locations, $\Gamma$ is a finite set of
  stack symbols and $\PdsRules$ is a finite set of pushdown rules of the form
  $ \Ang{p, \gamma} \PdsRule \Ang{p', w} $, where $ w \in \Gamma^{*} $ and
  $ |w| \leq 2 $.
\end{definition}
Note that the requirement $ |w| \leq 2 $ is not a serious restriction and any
pushdown system can be transformed to satisfy it.
This can be achieved by adding some fresh control locations and pushing
$|w|$ in a few steps.
The above is already quite enough for checking the reachability of finite
abstractions of programs.
The valuation of global variables can be encoded using control locations
$\ControlLocations$ (and the local variables, if needed, in the stack alphabet
$\StackAlphabet$).

Clearly a pushdown system gives rise to a (possibly infinite) transition
systems, where we can move between configurations using the pushdown rules.
The transition relation for this system is defined more formally below.
For every pushdown rule $ r = \Ang{p_1, \gamma} \PdsRule \Ang{p_2, u} $ we have
\[
  \Ang{p_1, \gamma w} \PdsTrans{r} \Ang{p_2, u w}
\]
for all $ w \in \StackAlphabet^{*} $.
Sometimes we will omit the annotation of the specific pushdown rule --- this
means that we assume there exists a rule that allows moving between the given
configurations.
The reflexive, transitive closure of $\PdsTrans{}$ will be denoted as
$\PdsTransS{}$ (and annotated with sequences of pushdown rules).
Having a precise definition of the transition relation (and its reflexive
transitive closure) allows us to define the concepts of successor and
predecessor configurations.
We call a configuration $c_2$ an immediate successor (predecessor) of $c_1$ if
$ c_1 \PdsTrans{} c_2 $ ($ c_2 \PdsTrans{} c_1 $).
Similar to immediate successors (predecessors) one can also define the general
successors (predecessors) using the $\PdsTransS{}$, namely a configuration $c_2$
a successor (predecessor) of $c_1$ if
$ c_1 \PdsTransS{} c_2 $ ($ c_2 \PdsTransS{} c_1 $).

In many verification problems it is desirable to talk about the sets of
successors or predecessors of a given configuration or set of configurations.
They are often denoted as $\PreStar(C)$ and $\PostStar(C)$ respectively, where
$C$ is some set of configurations.
More formally:
\begin{align*}
  \PreStar(C) &= \Set{ c_2 \mid c_2 \PdsTransS{} c_1, c_1 \in C }
  \\
  \PostStar(C) &= \Set{ c_2 \mid c_1 \PdsTransS{} c_2, c_1 \in C }
\end{align*}
Note that those sets can be in general infinite (even if $C$ is finite).
In order to compute the sets of successors and predecessor we need some symbolic
representation.
Therefore, we define the following.
\begin{definition}
  Given a pushdown system
  $\Pds = (\ControlLocations, \StackAlphabet, \PdsRules)$
  a $\Pds$-automaton is a tuple
  $(\States, \StackAlphabet, \AutoTrans{}, \ControlLocations, \FinalStates)$,
  where:
  \begin{itemize}
    \item $\States$ is a finite set of states such that
      $ \ControlLocations \subseteq \States $
    \item $ \AutoTrans{} \subseteq \States \times \StackAlphabet \times \States $
      is a finite set of transitions
    \item $ \ControlLocations \subseteq \States $ is a finite set of initial
      states
    \item $ \FinalStates \subseteq \States $ is a finite set of final states
  \end{itemize}
\end{definition}
We denote the transitive closure of $\AutoTrans{}$ as $\AutoTransS{}$.
Then we say that a $\Pds$-automaton accepts a configuration $\Ang{p, s}$ if and
only if $p \AutoTransS{w} q$ where $q \in F$.
Moreover, a set of configurations is regular if it is accepted by some
$\Pds$-automaton.

One of the crucial results in the pushdown systems says that the sets of
successors or predecessors of a regular set of configurations are regular
themselves~\cite{bib:pushdown-reachability,bib:efficient-pushdown,bib:schwoon-thesis}.
This is essential since it guarantees that we can always represent those sets as
$\Pds$-automata.
Therefore, the algorithms for $\PreStar$ and $\PostStar$ take as input a
pushdown system and an initial automaton $\Automaton$ that represents the set of
configurations whose predecessors or successors we want to compute.
Both algorithms are basically saturation procedures, i.e., they keep adding new
transitions to the $\Automaton$ according to some rule until no further
transitions (or constraints) can be added.
Since the number of possible transitions is finite (in $\PreStar$ the algorithm
does not add any new states, and in $\PostStar$ always a bounded numer of them),
the algorithms must terminate and return the $\APreStar$ or $\APostStar$, which
represent the possibly infinite number of reachable configurations.

%
%

\subsection{Weighted and Communicating Pushdown Systems}

This approach requires that the sets $\ControlLocations$ and $\StackAlphabet$
are finite, which makes it impossible to use infinite abstractions.
To make it possible to use such abstractions, the
papers~\cite{bib:wpds1,bib:wpds2,bib:cpds1,bib:cpds2} equipped every pushdown
rule with
a semiring value.
As already mentioned we will mostly follow the notation from WPDS, and thus we
present its slightly modified definition below.
\begin{definition}
  \label{def:wpds}
  A weighted pushdown system a tuple $ \WPds = (\Pds, \Semiring, f) $, where
  $\Pds$ is a pushdown system, $ \Semiring = (S, \Combine, \Extend, \Zero, \One) $
  is an idempotent flow algebra and $ f : \PdsRules \rightarrow S $ maps
  pushdown rules to the elements of $S$.
\end{definition}
The main difference when compared to the original definition is that we require
a flow algebra instead of bounded and idempotent semiring.\footnote{Bounded is
used to mean that it contains no infinite ascending
chains~\cite{bib:wpds1,bib:wpds2}.}
Now we can use the fact that every pushdown rule has a flow algebra weight to
define the weight of a sequence of pushdown rules.
Let $ \sigma = [r_1, \ldots, r_n] \in \PdsRules^{*} $ be such a sequence, then
we define $ v(\sigma) = f(r_1) \Extend \cdots \Extend f(r_n) $.
Moveover, the papers extended the algorithms for $\PreStar$ and $\PostStar$ (in
slightly different ways in case of WPDS and CPDS) to handle the addition of
weights.
The result is that both the $\APreStar$ and $\APostStar$ return weighted NFAs,
i.e., where each transition is annotated with a weight.
Apart from making it possible to answer reachability queries, they also provide
additional dataflow information for the given configuration.
In other words, we can not only ask whether a configuration is a successor or
predecessor but also what is the flow algebra value of getting from that
configuration ($\PreStar$) or to that configuration ($\PostStar$).
More formally, we additionally compute the following information:
\begin{itemize}
  \item in case of predecessors of some regular set of configurations $C$ (i.e.,
    if $c_1$ is a predecessor of some configuration in $C$)
    \[
      \delta(c_1) = \BigCombine
        \Set{ v(\sigma) \mid c_1 \PdsTransS{\sigma} c_2, c_2 \in C }
    \]
    is the flow algebra value of all the paths going from configuration
    $c_1 = \Ang{p, s}$ ($s \in \StackAlphabetS$) to any configuration in $C$.
    It can be obtained by simulating $\APreStar$ from state $p$ with input $s$
    multiplying the weights of the transitions in the same order as they are
    taken.
  \item in case of successors of some regular set of configurations $C$ (i.e.,
    if $c_1$ is a successor of some configuration in $C$)
    \[
      \delta(c_1) = \BigCombine
        \Set{ v(\sigma) \mid c_2 \PdsTransS{\sigma} c_1, c_2 \in C }
    \]
    is the flow algebra value of all the paths going from any configuration in
    $C$ to $c_1 = \Ang{p, s}$ ($s \in \StackAlphabetS$).
    It can be obtained by simulating $\APostStar$ from state $p$ with input $s$
    multiplying the weights of the transitions in the reverse order as they are
    taken.
\end{itemize}
Note that in both cases we only want to calculate the value for a predecessor or
successor, thus the sets of paths are never empty.
In case of $\PostStar$, the intuition behind reading the weights of a path in
the automaton in the reverse order is that when a configuration
$ \Ang{p, \gamma_k \ldots \gamma_1} $ is accepted, this means that there are
transitions in the automaton such that the first one is labeled with $\gamma_k$,
the second with $\gamma_{k-1}$ and so on.
However, when one thinks how the program would actually execute, it would build
the stack from the other end, i.e before it can push $\gamma_2$ on the stack, it
must push $\gamma_1$.
Therefore, the weights should be multiplied in the reverse order.

\section{Algorithms}
\label{sec:algorithms}
As already mentioned, WPDS and CPDS are assuming that the abstract domain forms
a semiring structure.
This immediately excludes standard analyses based on monotone framework or
bitvector framework.
Fortunately we will show that it is possible to formulate algorithms for
$\PreStar$ and $\PostStar$ that do not need this assumption.
We achieve that by generating the constraints during the saturation procedures
that create the $\APreStar$ and $\APostStar$ (in WPDS no constraints are
generated and the weights are calculated directly, in CPDS constraints are
generated independently of the $\APreStar$ and $\APostStar$ construction).
We will use $\APreStarC$ and $\APostStarC$ to denote the automata with the
associated set of constraints $\Constraints$.
The rest of the section will introduce the algorithms and in the subsequent
sections we will discuss their soundness and completeness.
In this way we believe that we can present the minimum requirements that are
necessary for interprocedural analysis based on pushdown systems.

%
%

\subsection{Algorithm for $\PreStar$}

The procedure introduced in this section is quite similar to the one
from~\cite{bib:cpds1,bib:cpds2} as it generates explicit constraints.
However, it does it during the automaton computation not separately.
In this respect it is somewhat similar to the procedure
from~\cite{bib:wpds1,bib:wpds2} that computes both the weights and the automaton
at the same time.
Also, note that there is no difference with respect to how the new transitions
are added to the automaton.
Therefore, we are able to reuse the standard results with respect to the
automaton itself (i.e., excluding the weights).

The algorithm is as follows.
First, for every transition $q \AutoTrans{\gamma} q'$ in $\Automaton$ we add a
constraint
\[
\One \Leq \PreCon{q}{\gamma}{q'}
\]
(we use $\PreConSymb(-)$ in the constraints to denote the weight of the given
transition)
Then we perform the saturation procedure on $\Automaton$ along with the
generation of constraints that are added to $\Constraints$.
For every pushdown rule $r$ in $\PdsRules$:
\begin{itemize}
\item if $r = \Ang{p, \gamma} \PdsRule \Ang{p', \epsilon}$
  we add a transition
  \[
  p \AutoTrans{\gamma} p'
  \]
  along with the following constraint
  \[
  f(r) \Leq \PreCon{p}{\gamma}{p'}
  \]
\item if $r = \Ang{p, \gamma} \PdsRule \Ang{p', \gamma'}$
  and there is a transition $p' \AutoTrans{\gamma'} q$ in the current
  automaton, we add a transition
  \[
  p \AutoTrans{\gamma} q
  \]
  along with the following constraint
  \[
  f(r) \Extend \PreCon{p'}{\gamma'}{q} \Leq \PreCon{p}{\gamma}{q}
  \]
\item if $r = \Ang{p, \gamma} \PdsRule \Ang{p', \gamma' \gamma''}$
  and there is a path $p' \AutoTrans{\gamma'} q' \AutoTrans{\gamma''} q$ (for some
  $q'$) in the current automaton, we add a transition
  \[
  p \AutoTrans{\gamma} q
  \]
  along with the following constraint
  \[
  f(r) \Extend \PreCon{p'}{\gamma'}{q'} \Extend \PreCon{q'}{\gamma''}{q}
    \Leq \PreCon{p}{\gamma}{q}
  \]
\end{itemize}
We stop once we cannot add any new constraints or transitions.
And since the number of possible transitions and constraints is finite, the
procedure will always terminate.

%
%

\subsection{Algorithm for $\PostStar$}

As in the case of $\PreStar$ algorithm, we only change the way the constraints
are generated, and not how new transitions are added to the automaton.
Recall that we require the initial automaton $\Automaton$ to have no
transitions going into the initial states nor any $\epsilon$-transitions.
As already mentioned, the weights of a $\APostStar$ should be multiplied in the
reverse order compared to the take transitions.
Therefore, we will use the reverse arrow notation for the transitions of the
automata, i.e., we will write $q \AutoTransR{\gamma} p$ for the transition
earlier denoted by $p \AutoTrans{\gamma} q$.
Furthermore, as already noted in~\cite{bib:cpds1,bib:cpds2} the
$\epsilon$-transitions added by the algorithm always originate in an initial
state and go only to some non-initial state.
Therefore, we can conclude that we can take at most one $\epsilon$-transition
(when going from initial state to some non-initial one) and then we can only
take non $\epsilon$-transitions.
Therefore, let us use $\AutoTransRE{\gamma}$ to denote
$(\AutoTransR{\gamma} \circ \AutoTransR{\epsilon}) \cup \AutoTransR{\gamma}$
and define $\PostConSymbE$ as
\[
\PostConSymbE(\rho) =
  \begin{cases}
  \PostCon{p}{\gamma}{q}
    & \mbox{ if } \rho = q \AutoTransR{\gamma} p \\
  \PostCon{q'}{\gamma}{q} \Extend \PostCon{p}{\epsilon}{q'}
    & \mbox{ if } \rho = q \AutoTransR{\gamma} q' \AutoTransR{\epsilon} p \\
  \end{cases}
\]

The algorithm is as follows.
First, for every transition $q' \AutoTransR{\gamma} q$ in $\Automaton$ we add a
constraint
\[
\One \Leq \PostCon{q}{\gamma}{q'}
\]
Then for all pushdown rules of the form
$\Ang{p, \gamma} \PdsRule \Ang{p', \gamma' \gamma''}$
we add a new state $q_{p', \gamma'}$ to the automaton.
Finally, for every pushdown rule $r$ in $\PdsRules$:
\begin{itemize}
\item if $r = \Ang{p, \gamma} \PdsRule \Ang{p', \epsilon}$
  and there is a path $\rho = q \AutoTransRE{\gamma} p$ then add a transition
  \[
  q \AutoTransR{\epsilon} p'
  \]
  along with the following constraint
  \[
  \PostConPE{p}{\gamma}{\rho}{q} \Extend f(r) \Leq \PostCon{p'}{\epsilon}{q}
  \]
  Note that this transition (and its weight) takes care of the return from a
  procedure.
\item if $r = \Ang{p, \gamma} \PdsRule \Ang{p', \gamma'}$
  and there is a path $\rho = q \AutoTransRE{\gamma} p$ then add a transition
  \[
  q  \AutoTransR{\gamma'} p'
  \]
  along with the following constraint
  \[
  \PostConPE{p}{\gamma}{\rho}{q} \Extend f(r) \Leq \PostCon{p'}{\gamma'}{q}
  \]
\item if $r = \Ang{p, \gamma} \PdsRule \Ang{p', \gamma' \gamma''}$
  and there is a path $\rho = q \AutoTransRE{\gamma} p$ then add transitions
  \[
  q  \AutoTransR{\gamma''} q_{p', \gamma'}
  \qquad
  q_{p', \gamma'} \AutoTransR{\gamma'} p'
  \]
  along with the following constraints
  \begin{align*}
  \One & \Leq \PostCon{p'}{\gamma'}{q_{p', \gamma'}} \\
  \PostConPE{p}{\gamma}{\rho}{q} \Extend f(r) & \Leq \PostCon{q_{p', \gamma'}}{\gamma''}{q} \\
  \end{align*}
  Note that this transition $q \AutoTransR{\gamma''} q_{p', \gamma'}$ (and its
  weight) takes care of the procedure call.
\end{itemize}
Again, as in the case of $\PreStar$ we stop once we cannot add any new
constraints or transitions.
And since the number of possible transitions and constraints is finite, the
procedure will always terminate (note that we add some new states only at the
beginning of the procedure and not in the saturation phase).

\section{Soundness}
\label{sec:soundness}
In this section we will discuss and present the main results regarding the
soundness of our algorithms.
Since one of the goals of our formulation of the algorithms is to make the
requirements imposed on the abstract domain explicit and precise, we take a
particular approach to the soundness proofs.
We do not discuss how the generated constraints can be solved (and if they can
be solved at all).
Instead we assume that some solution to those constraints is available and show
that it is a safe over-approximation of the join over all valid paths.

Apart from that, separating the requirements necessary to solve the constraints
from the soundness result gives us the flexibility to easily accommodate
different techniques of solving the constraints.
One can use the usual Kleene iteration, but also more recent approaches using
Newton's method generalized to $\omega$-continuous
semirings~\cite{bib:newtonian,bib:derivation-tree-analysis}.
Furthermore, it also makes it clear that techniques such as widening can be used
for domains that contain infinite ascending chains but do not satisfy the
requirements of Newton's method.

%
%

\subsection{$\PreStar$}

We will start with some intuition about how the pushdown system $\Pds$ and the
automaton $\Automaton$ fit together.
Observe that if a configuration is backward reachable from $C$, there exists a
sequence of pushdown rules in $\PdsRules$ such that the resulting configuration
is accepted by $\Automaton$.
Therefore, we can intuitively think about this system as a one big pushdown
system
$\PdsA = (P, \Gamma, \PdsRulesPre)$, where
\[
  \PdsRulesPre = \PdsRules \cup
    \Set{
      \Ang{q, \gamma} \PdsRule{} \Ang{q', \epsilon} \mid
        q \AutoTrans{\gamma} q' \in \AutoTransSet
    }
\]
With each added pushdown rule we associate the weight $\One$.
This system works by first acting like $\Pds$ and then, at some point, switching
to simulating $\Automaton$ (with the added pushdown rules).
Note that once $\PdsA$ starts using the added pushdown rules, it cannot use the
ones of $\Pds$.
This is because rules in $\Pds$ correspond to the initial states of $\Automaton$
and since it does not have any transitions going to initial states, then the
first used rule from $\PdsRulesPre \setminus \PdsRules$ will go to some
non-initial state.
Thus no pushdown rule of $\Pds$ will be applicable.

This is useful because it allows us to look at the problem of predecessors of
$C$ from a slightly different angle.
Let us consider the automaton $\APreStar$, we say that a configuration $c_p$ is a
predecessor of some configuration $c \in C$ if there is a sequence $\sigma \in
\PdsRules^{*}$ of pushdown rules such that $c_p \PdsTransS{\sigma} c$.
But since $c$ is recognized by $\Automaton$ then there is a sequence
$\sigma' \in \PdsRulesPre^{*}$ such that
$c \PdsTransS{\sigma'} \Ang{q_f, \epsilon}$ for some final state $q_f$.
Therefore, an alternative way to define a predecessor is to say that a
configuration $c_p$ is a predecessor of some configuration $c$ in $C$ if there is
a sequence $\sigma_p \in \PdsRulesPre^{*}$ of pushdown rules such that
$c_p \PdsTransS{\sigma_p} \Ang{q_f, \epsilon}$ for some state $q_f \in F$.
Moreover, since we have that each of the added rules has weight $\One$ then
$v(\sigma) = v(\sigma_p)$.

In the following sections the solution to the constraints will be denoted as
$\PreSolSymb$ (i.e., maps each transition to its weight).
Its generalization to paths $\PreSolSymbP$ is inductively defined as follows:
\[
  \PreSolSymbP(\rho) =
  \begin{cases}
    \PreSol{q}{\gamma}{q'}
      & \text{if} \quad
      \rho = q \AutoTrans{\gamma} q' \\
    \PreSol{q}{\gamma}{q''} \Extend \PreSolSymbP(\rho')
      & \text{if} \quad
        \rho = q \AutoTrans{\gamma} \overbrace{q'' \AutoTransS{s'} q'}^{\rho'}
  \end{cases}
\]

Now we are ready to prove that a solution to the constraints generated by our
saturation procedure is sound.
\begin{theorem}
  \label{thm:ann/pre-soundness}
  Consider an automaton $\Automaton$ and its corresponding $\APreStarC$
  generated by the saturation procedure.
  Let us assume that we have a solution $\PreSolSymb$ to the set of constraints
  $\Constraints$.
  Then for each pair $(p, s)$ such that
  $\Ang{p, s} \PdsTransS{\sigma} \Ang{q_f, \epsilon}$
  (where $\sigma \in \PdsRulesPre^{*}$ and $q_f \in F$), we have
  $v(\sigma) \Leq \PreSolSymbP(\rho)$ where
  $\rho = p \AutoTransS{s} q_f$ is in $\APreStar$.
\end{theorem}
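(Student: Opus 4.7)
The plan is to prove the claim by induction on the length $n = |\sigma|$, with the inductive statement generalized so that the target state $q_f$ and the target stack $\epsilon$ arise naturally from the $\PdsRulesPre$ viewpoint. The convention is that the empty path at a state contributes $\One$, so the base case $n=0$ reduces to $\One \Leq \One$, using $p = q_f$ and $s = \epsilon$. For the inductive step I would write $\sigma = r \cdot \sigma'$ and split on whether $r$ is one of the automaton-simulating rules in $\PdsRulesPre \setminus \PdsRules$ (so $f(r) = \One$) or an original pushdown rule in $\PdsRules$ of one of the three shapes.

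In the simulating case, $r$ corresponds to a transition $q \AutoTrans{\gamma} q'$ already present in $\Automaton$, for which the algorithm generates $\One \Leq \PreSol{q}{\gamma}{q'}$. The induction hypothesis provides a path $\rho' = q' \AutoTransS{u} q_f$ with $v(\sigma') \Leq \PreSolSymbP(\rho')$, and prepending the automaton transition yields $\rho$; monotonicity of $\Extend$ (the key flow-algebra axiom used throughout) gives $\PreSolSymbP(\rho) \geq \One \Extend \PreSolSymbP(\rho') \geq v(\sigma)$. For a pop rule $\Ang{p,\gamma} \PdsRule \Ang{p',\epsilon}$ the argument is identical, except $f(r)$ replaces $\One$ and the saturation produces the fresh transition $p \AutoTrans{\gamma} p'$ carrying the constraint $f(r) \Leq \PreSol{p}{\gamma}{p'}$.

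For the swap rule $\Ang{p,\gamma} \PdsRule \Ang{p',\gamma'}$ the IH gives a path $\rho' = p' \AutoTransS{\gamma' u} q_f$, which I would decompose as $p' \AutoTrans{\gamma'} q'' \AutoTransS{u} q_f$ for some intermediate state $q''$ (any accepting path over $\gamma' u$ must start with a $\gamma'$-transition). The saturation rule, triggered by the existence of $p' \AutoTrans{\gamma'} q''$, installs $p \AutoTrans{\gamma} q''$ together with $f(r) \Extend \PreSol{p'}{\gamma'}{q''} \Leq \PreSol{p}{\gamma}{q''}$. Associativity of $\Extend$ then regroups $\PreSolSymbP(\rho)$ along $\rho = p \AutoTrans{\gamma} q'' \AutoTransS{u} q_f$ so monotonicity applied twice (once to push $f(r)$ past the IH inequality on the suffix, once to absorb the combined $f(r) \Extend \PreSol{p'}{\gamma'}{q''}$ into $\PreSol{p}{\gamma}{q''}$) closes the step. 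The push rule $\Ang{p,\gamma} \PdsRule \Ang{p',\gamma'\gamma''}$ is handled by the same pattern with a two-step decomposition $p' \AutoTrans{\gamma'} q' \AutoTrans{\gamma''} q \AutoTransS{u} q_f$ of $\rho'$, matched to the three-factor constraint generated by the algorithm.

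The main obstacle is purely combinatorial rather than algebraic: I must justify that the IH path $\rho'$ really does admit the prefix decomposition required to match the right-hand side of each pushdown rule, so that the corresponding saturation step has indeed fired and the constraint on $\PreConSymb(p \AutoTrans{\gamma} q'')$ is in $\Constraints$. This rests on the invariant that any path in a $\Pds$-automaton reading a stack word $\gamma' u$ (or $\gamma' \gamma'' u$) factors uniquely through the intermediate states visited after each symbol; once this is made explicit, no assumption beyond idempotent $\Combine$, associativity of $\Extend$, and monotonicity of $\Extend$ is used, confirming that strictness and distributivity are not required for soundness.
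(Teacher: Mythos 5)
Your proposal is correct and follows essentially the same route as the paper's proof: induction on $|\sigma|$, peeling off the first rule, decomposing the automaton path given by the induction hypothesis to match the right-hand side of that rule, and closing each case with the generated constraint plus monotonicity of $\Extend$. The only differences are cosmetic --- the paper excludes $|\sigma|=0$ (since $P$ and $F$ are disjoint) rather than adopting an empty-path convention, and when the first rule is a simulating rule from $\PdsRulesPre \setminus \PdsRules$ it argues directly that the entire remaining sequence lies in $\PdsRulesPre \setminus \PdsRules$ and has weight $\One$, whereas you invoke the induction hypothesis on the suffix; both are sound.
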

\begin{proof}
The proof is available in App.~\ref{app:ann/pre-soundness}.
\end{proof}

%
%

\subsection{$\PostStar$}

As previously we can think about this system as a one big pushdown system.
However, this time such a system would first simulate the reverse of
$\Automaton$, i.e., instead of accepting some configuration, it generates
one; and only then continue by running the pushdown system itself.
Let us denote such a system as
$\ArPds = (P, \Gamma, \PdsRulesPost)$, where $\PdsRulesPost$ is defined as
follows.
\begin{itemize}
\item For every $q' \AutoTransR{\gamma} q$ in $\Automaton$ we have a rule
  $r = \Ang{q', \epsilon} \PdsRule{} \Ang{q, \gamma}$ in $\PdsRulesPost$
  such that $f(r) = \One$.
\item All other rules of $\PdsRules$ are included in $\PdsRulesPost$.
\end{itemize}

Let us consider the automaton $\APostStar$. We say that a configuration $c'$ is a
successor of some configuration $c$ in $C$ if there is a sequence $\sigma \in
\PdsRules^{*}$ of pushdown rules such that
$c \PdsTransS{\sigma} c'$.
But since $c$ is recognized by $\Automaton$ then there is a sequence
$\sigma' \in \PdsRulesPost^{*}$ such that
$\Ang{q_f, \epsilon} \PdsTransS{\sigma'} c$ for some final state $q_f$.
Therefore, an alternative way to define a successor is to say that
a configuration $c'$ is a successor of some configuration $c \in C$ if there
is a sequence $\sigma_p \in \PdsRulesPost^{*}$ of pushdown rules such that
$\Ang{q_f, \epsilon} \PdsTransS{\sigma_p} c'$ for some state $q_f \in F$.
Moreover, since we have that each of the added rules has weight $\One$ then
$v(\sigma) = v(\sigma_p)$.

Similarly as in the case of $\PreStar$, we define $\PostSolSymbP$ in the
following way:
\[
  \PostSolSymbP(\rho) =
  \begin{cases}
  \PostSol{q'}{\gamma}{q}
    & \text{if} \quad
    \rho = q' \AutoTransR{\gamma} q \\
  \PostSolSymbP(\rho') \Extend \PostSol{q''}{\gamma}{q}
    & \text{if} \quad
      \rho = \overbrace{q' \AutoTransRS{s'} q''}^{\rho'} \AutoTransR{\gamma} q
  \end{cases}
\]
As already mentioned we multiply the weight in the reverse order compared to the
order of transitions in the given path.

\begin{theorem}
  \label{thm:ann/post-soundness}
  Consider an automaton $\Automaton$ and its corresponding $\APostStarC$
  generated by the saturation procedure.
  Let us assume that we have a solution $\PostSolSymb$ to the set of constraints
  $\Constraints$.
  Then for each pair $(p, s)$ such that
  $\Ang{q_f, \epsilon} \PdsTransS{\sigma} \Ang{p, s}$
  (where $\sigma \in \PdsRulesPost^{*}$ and $q_f \in F$), we have
  $v(\sigma) \Leq \PostSolSymbP(\rho)$ where
  $\rho = q_f \AutoTransRS{s} p$ is in $\APostStarC$.
\end{theorem}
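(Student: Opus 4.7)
The plan is to mirror the structure of the proof of Theorem~\ref{thm:ann/pre-soundness}, adapted to the reverse-arrow convention used for $\PostStar$. I would proceed by induction on the length of $\sigma \in \PdsRulesPost^{*}$ with a case split on the last rule applied.

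For the base case $|\sigma| = 0$ we have $\Ang{p, s} = \Ang{q_f, \epsilon}$, so $\rho$ is the empty path at $q_f$; the inequality $v(\sigma) = \One \Leq \One = \PostSolSymbP(\rho)$ is immediate once $\PostSolSymbP$ is extended to the empty path by $\One$.

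In the inductive step I would split $\sigma = \sigma' \cdot r$ with an intermediate configuration $\Ang{p_1, s_1}$ so that $\Ang{q_f, \epsilon} \PdsTransS{\sigma'} \Ang{p_1, s_1} \PdsTrans{r} \Ang{p, s}$, and apply the induction hypothesis to obtain a path $\rho_1 = q_f \AutoTransRS{s_1} p_1$ in $\APostStarC$ with $v(\sigma') \Leq \PostSolSymbP(\rho_1)$. Two families of cases arise. If $r = \Ang{q', \epsilon} \PdsRule \Ang{q, \gamma}$ is one of the rules added to $\PdsRulesPost$ from an $\Automaton$-transition $q' \AutoTransR{\gamma} q$, then $f(r) = \One$, $p_1 = q'$, $s = \gamma s_1$, and $\rho$ is obtained by appending that very transition to $\rho_1$; the initial constraint $\One \Leq \PostCon{q}{\gamma}{q'}$, combined with the fact that $\One$ is the right identity for $\Extend$ and $\Extend$ is monotonic, closes the case. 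Otherwise $r \in \PdsRules$ acts on the top stack symbol $\gamma$ of $s_1$, so $s_1 = \gamma s''$ and $\rho_1$ admits a decomposition
\[
  \rho_1 \;=\; (q_f \AutoTransRS{s''} q) \;\cdot\; (q \AutoTransRE{\gamma} p_1)
\]
for some state $q$. The use of $\AutoTransRE{\gamma}$ in the tail reflects the structural observation recalled in Sec.~\ref{sec:algorithms}: at most one $\epsilon$-transition can appear there, and only when leaving an initial state. The corresponding saturation clause for the shape of $r$ --- pop, swap, or push --- then exhibits new transitions in $\APostStarC$ whose weights are bounded below by $\PostConSymbE(q \AutoTransRE{\gamma} p_1) \Extend f(r)$, so that assembling them onto $(q_f \AutoTransRS{s''} q)$ yields the required path $\rho$, and monotonicity of $\Extend$ delivers $v(\sigma) \Leq \PostSolSymbP(\rho)$.

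The main obstacle I anticipate is the push case $r = \Ang{p_1, \gamma} \PdsRule \Ang{p, \gamma' \gamma''}$. Here the saturation introduces a fresh intermediate state $q_{p, \gamma'}$ and splits the new material between two transitions $q \AutoTransR{\gamma''} q_{p, \gamma'}$ and $q_{p, \gamma'} \AutoTransR{\gamma'} p$, governed by the separate constraints $\One \Leq \PostCon{p}{\gamma'}{q_{p, \gamma'}}$ and $\PostConSymbE(q \AutoTransRE{\gamma} p_1) \Extend f(r) \Leq \PostCon{q_{p, \gamma'}}{\gamma''}{q}$. Chaining these inequalities in the correct order to bound $\PostSolSymbP(\rho)$ from below, while relying only on monotonicity of $\Extend$ --- never distributivity, which is the whole point of dropping from semirings to flow algebras --- is the fiddliest piece of bookkeeping. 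A secondary subtlety is the uniform justification of the $\AutoTransRE{\gamma}$ decomposition of $\rho_1$, which rests on the structural property above about where $\epsilon$-transitions may live in $\APostStarC$.
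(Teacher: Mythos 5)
Your proposal follows essentially the same route as the paper's proof: induction on $|\sigma|$, splitting off the last rule, distinguishing the added $\Automaton$-simulation rules (weight $\One$) from the rules of $\PdsRules$, decomposing the path via $\AutoTransRE{\gamma}$, and then closing each of the pop/swap/push cases with the generated constraint and monotonicity of $\Extend$ alone. The only cosmetic differences are that the paper starts the induction at $|\sigma|=1$ (since $P$ and $F$ are disjoint, $|\sigma|=0$ cannot occur) and, in the added-rule case, argues directly that all of $\sigma$ then consists of weight-$\One$ rules rather than invoking the induction hypothesis and appending one transition; both variants are sound.
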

\begin{proof}
The proof is available in App.~\ref{app:ann/post-soundness}.
\end{proof}

\section{Completeness}
\label{sec:completeness}
In this section we will prove the completeness of our procedure, i.e., we will
show that provided the abstract domain satisfies certain conditions, the
solution to the generated constraints will coincide with the join over all valid
paths.
The presentation of the results (and their proofs) is quite a bit different than
in the case of soundness.
This is mainly due to the additional complexity of the proofs as well as some
additional restrictions that must to be imposed.
Throughout the whole section we assume that the flow algebra is both
\emph{complete} and \emph{affine}.
In other words we have least upper bounds of arbitrary sets and $\Extend$
distributes over sums of all non-empty sets.

Before we present the main results for each of the two algorithms, let us first
establish that the solution to the generated constraints can be obtained by
Kleene iteration.
To achieve that we will define a function that represents the constraints and
show that it is continuous.
Let us recall that all generated constraints are of similar form: the right-hand
side is a variable and the left-hand side is a finite expression mentioning at
most two variables.
The finite expressions are constructed using $\Combine$ and $\Extend$ which are
themselves affine and hence continuous.

For clarity let $\Constraints_t \subseteq \Constraints$ denote the finite set of
the constraints that have the variable $t$ on the right-hand side.
Recall that each variable corresponds to a transition in an automaton.
Similarly we will use $\Lhs_m(c)$ ($c \in \Constraints$) to denote the
interpretation of the left-hand side of the constraint $c$ under the assignment
$m$.

What we want to compute is a mapping $m$ that is a fixed point of:
\begin{align*}
  & F : (\Transitions \rightarrow D) \rightarrow (\Transitions \rightarrow D)
  \\
  & F(m) t = \BigCombine_{c \in \Constraints_t} \Lhs_m(c)
\end{align*}
where $\Transitions$ is the set of all transitions.
\begin{lemma}
  \label{lem:continuous}
  $F$ is continuous, i.e., for any non-empty chain $Y$:
  \[
    F(\BigLub Y) = \BigLub_{m \in Y} F(m)
  \]
\end{lemma}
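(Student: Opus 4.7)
The plan is to reduce continuity of $F$ to continuity of each individual left-hand side $\Lhs_m(c)$, and then handle those by a case analysis on the very limited syntactic shape of the generated constraints. First, I would note that the function space $\Transitions \to D$ carries the pointwise complete-lattice structure, so $(\BigLub Y)(t) = \BigLub_{m \in Y} m(t)$ for every transition $t$. It therefore suffices to show for each fixed $t$ that
\[
\BigCombine_{c \in \Constraints_t} \Lhs_{\BigLub Y}(c) \;=\; \BigLub_{m \in Y} \BigCombine_{c \in \Constraints_t} \Lhs_m(c).
\]
Because $\Constraints_t$ is finite and $\Combine$ is itself continuous (as an idempotent commutative least upper bound satisfying $a \Combine \BigLub Y = \BigLub_{b \in Y}(a \Combine b)$), this further reduces to proving $\Lhs_{\BigLub Y}(c) = \BigLub_{m \in Y} \Lhs_m(c)$ for each single $c$.

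Next, I would classify the shape of a left-hand side. Inspecting the saturation rules for $\PreStar$ and $\PostStar$, every $\Lhs_m(c)$ is a fixed $\Extend$-product of constants (either $\One$ or some $f(r)$ with $r \in \PdsRules$) with at most two variables of the form $m(t_i)$. The purely constant cases are trivially continuous. For a one-variable expression such as $k_1 \Extend m(t_1) \Extend k_2$, affineness of $\Extend$ applied in the middle argument directly yields
\[
k_1 \Extend \bigl(\BigLub_{m \in Y} m(t_1)\bigr) \Extend k_2 \;=\; \BigLub_{m \in Y}\bigl(k_1 \Extend m(t_1) \Extend k_2\bigr).
\]

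The main obstacle is the two-variable case $k \Extend m(t_1) \Extend m(t_2)$, since applying affineness separately in each variable \emph{a priori} produces a double join $\BigLub_{m_1, m_2 \in Y}\bigl(k \Extend m_1(t_1) \Extend m_2(t_2)\bigr)$ indexed over pairs rather than the desired diagonal. Here I would crucially exploit that $Y$ is a chain: given $m_1, m_2 \in Y$, one of them dominates the other, say $m_1 \Leq m_2$, whence monotonicity of $\Extend$ gives $k \Extend m_1(t_1) \Extend m_2(t_2) \Leq k \Extend m_2(t_1) \Extend m_2(t_2)$. This collapses the double join to the diagonal $\BigLub_{m \in Y}\bigl(k \Extend m(t_1) \Extend m(t_2)\bigr)$. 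Combining this diagonalization with two applications of affineness yields continuity of every $\Lhs_m(c)$ in $m$, and together with the finite-$\Combine$ step above this establishes $F(\BigLub Y) = \BigLub_{m \in Y} F(m)$.
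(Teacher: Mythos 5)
Your proof is correct and follows essentially the same route as the paper's: reduce to the pointwise case via $(\BigLub Y)(t) = \BigCombine_{m \in Y} m(t)$, pull the join through the finite $\BigCombine$ over $\Constraints_t$, and use affineness of $\Extend$ on each left-hand side. You are in fact more explicit than the paper at the one point where the chain hypothesis is genuinely needed --- collapsing the double join over pairs $(m_1, m_2)$ arising from the two-variable left-hand sides to the diagonal join over $m \in Y$ --- a step the paper's proof subsumes under the single justification ``$D$ is affine, $Y$ is not empty and the constraints are finite.''
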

\begin{proof}
  The proof is available in App.~\ref{app:ann/continuity-proof}.
\end{proof}
It follows that $ \BigLub \Set{ F^{n}(\bot) \mid n \in \Naturals } $ is the
least solution to our constraint system.

%
%

\subsection{$\PreStar$}

We will first establish a lemma showing that every transition in the $\APreStar$
automaton has at least one corresponding path in the $\PdsA$.
This will be useful in subsequent proofs where we need the fact that certain
sets of $\PdsA$ paths are not empty.
\begin{lemma}
  \label{lem:ann/pre-exists-path}
  For every transition $q \AutoTrans{\gamma} q'$ in $\APreStar$ there exists a
  sequence $\sigma \in \PdsRulesPre$ such that
  $\Ang{q, \gamma} \PdsTransS{\sigma} \Ang{q', \epsilon}$.
\end{lemma}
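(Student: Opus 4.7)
The plan is to proceed by induction on the stage at which the transition is added to $\APreStar$ during the saturation procedure. The base case handles the transitions already present in $\Automaton$, and the inductive step treats each of the three rule-driven cases by which the procedure can add a new transition.

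For the base case, if $q \AutoTrans{\gamma} q'$ already belongs to $\Automaton$, then by the very definition of $\PdsRulesPre$ the rule $r = \Ang{q, \gamma} \PdsRule \Ang{q', \epsilon}$ lies in $\PdsRulesPre$, and a single application of $r$ witnesses $\Ang{q, \gamma} \PdsTrans{r} \Ang{q', \epsilon}$. Thus $\sigma = r$ suffices.

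For the inductive step, suppose the transition $p \AutoTrans{\gamma} q$ is introduced at some stage of saturation triggered by a pushdown rule $r \in \PdsRules$ together with zero, one, or two previously existing transitions, and assume the lemma holds for every transition already in the automaton at that stage. I would handle the three cases in turn. If $r = \Ang{p, \gamma} \PdsRule \Ang{p', \epsilon}$, then $r$ by itself witnesses the claim. If $r = \Ang{p, \gamma} \PdsRule \Ang{p', \gamma'}$ with an existing transition $p' \AutoTrans{\gamma'} q$, the induction hypothesis supplies $\sigma_1 \in \PdsRulesPre^*$ with $\Ang{p', \gamma'} \PdsTransS{\sigma_1} \Ang{q, \epsilon}$, and then $r \cdot \sigma_1$ witnesses $\Ang{p, \gamma} \PdsTrans{r} \Ang{p', \gamma'} \PdsTransS{\sigma_1} \Ang{q, \epsilon}$. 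Finally, if $r = \Ang{p, \gamma} \PdsRule \Ang{p', \gamma' \gamma''}$ with an existing path $p' \AutoTrans{\gamma'} q' \AutoTrans{\gamma''} q$, the induction hypothesis supplies $\sigma_1$ with $\Ang{p', \gamma'} \PdsTransS{\sigma_1} \Ang{q', \epsilon}$ and $\sigma_2$ with $\Ang{q', \gamma''} \PdsTransS{\sigma_2} \Ang{q, \epsilon}$, and the concatenation $r \cdot \sigma_1 \cdot \sigma_2$ works.

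The only subtlety I expect to need to spell out is the standard stack-extension property of pushdown transitions: namely, that $\Ang{p', \gamma'} \PdsTransS{\sigma_1} \Ang{q', \epsilon}$ implies $\Ang{p', \gamma' \gamma''} \PdsTransS{\sigma_1} \Ang{q', \gamma''}$, because pushdown rules act only on the topmost stack symbol. This is precisely what makes the composition in the two-symbol push case go through and what allows the inductively obtained subsequences to be chained. Beyond that observation, the argument is purely bookkeeping on the saturation rules, and I do not anticipate any conceptual obstacle.
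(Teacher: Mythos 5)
Your proposal is correct and follows essentially the same route as the paper's proof: induction on the saturation stage, with the base case discharged by the rules $\Ang{q,\gamma} \PdsRule \Ang{q',\epsilon}$ added to $\PdsRulesPre$ for the transitions of $\Automaton$, and the inductive step split into the same three cases according to the form of the triggering pushdown rule. The stack-extension property you single out is indeed the only point needing care in the two-symbol case, and the paper uses it implicitly in exactly the way you describe.
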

\begin{proof}
  The proof is available in App.~\ref{app:ann/pre-exists-path}.
\end{proof}

First we will establish the essential result for a single transition of the
created automaton.
\begin{lemma}
\label{lem:ann/pre-completeness-transition}
  Consider a weighted pushdown system $\WPds = (\Pds, \FlowAlgebra, f)$ where
  $\FlowAlgebra$ is affine and an automaton $\APreStarC$ created by the
  saturation procedure.
  Moreover, let $\PreSolSymb$ be the least solution to the set of constraints
  $\Constraints$.
  For every transition $q \AutoTrans{\gamma} q'$ in this automaton we have that
  \[
  \PreSol{q}{\gamma}{q'} \Leq \BigCombine \Set{ v(\sigma) \mid
    \Ang{q, \gamma} \PdsTransS{\sigma} \Ang{q', \epsilon}, \sigma \in \PdsRulesPre^{*} }
  \]
\end{lemma}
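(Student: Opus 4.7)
The plan is to exhibit the \emph{target bound}
\[
  \mu(q \AutoTrans{\gamma} q') \;=\; \BigCombine \Set{ v(\sigma) \mid \Ang{q, \gamma} \PdsTransS{\sigma} \Ang{q', \epsilon}, \sigma \in \PdsRulesPre^{*} }
\]
itself as a \emph{solution} to the constraint system $\Constraints$, so that by minimality of $\PreSolSymb$ (Lemma~\ref{lem:continuous} gives that $\PreSolSymb$ is the least fixed point, obtained as $\BigLub_n F^n(\bot)$), we get $\PreSolSymb \Leq \mu$ transition by transition. This reduces the task to checking that $\mu$ satisfies every constraint added by the saturation procedure, pointwise on the right-hand sides.

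I would then walk through the four constraint schemas of the algorithm. For the seed constraint $\One \Leq \PreSol{q}{\gamma}{q'}$ attached to each original transition of $\Automaton$, the added rule $\Ang{q,\gamma} \PdsRule \Ang{q', \epsilon}$ in $\PdsRulesPre$ has weight $\One$, so the one-rule sequence witnesses $\One \Leq \mu(q \AutoTrans{\gamma} q')$. For a pop rule $r = \Ang{p,\gamma} \PdsRule \Ang{p',\epsilon}$ the sequence $[r]$ itself gives $f(r) \Leq \mu(p \AutoTrans{\gamma} p')$. For a swap rule $r = \Ang{p,\gamma} \PdsRule \Ang{p',\gamma'}$ combined with a path witnessing $p' \AutoTrans{\gamma'} q$, every sequence $\sigma$ with $\Ang{p',\gamma'} \PdsTransS{\sigma} \Ang{q,\epsilon}$ yields a new sequence $r\cdot\sigma$ with weight $f(r) \Extend v(\sigma)$, and by affineness of $\FlowAlgebra$ distributing $f(r)\Extend(-)$ over the (non-empty, thanks to Lemma~\ref{lem:ann/pre-exists-path}) join gives
\[
  f(r) \Extend \mu(p' \AutoTrans{\gamma'} q) \;\Leq\; \mu(p \AutoTrans{\gamma} q) .
\]
For a push rule $r = \Ang{p,\gamma} \PdsRule \Ang{p',\gamma'\gamma''}$ with a path $p' \AutoTrans{\gamma'} q' \AutoTrans{\gamma''} q$, I would use the standard stack-extension invariance: if $\Ang{p',\gamma'} \PdsTransS{\sigma_1} \Ang{q',\epsilon}$ then $\Ang{p',\gamma'\gamma''} \PdsTransS{\sigma_1} \Ang{q',\gamma''}$, so concatenating $[r]$, $\sigma_1$, and any $\sigma_2$ with $\Ang{q',\gamma''} \PdsTransS{\sigma_2} \Ang{q,\epsilon}$ gives a valid sequence of weight $f(r) \Extend v(\sigma_1) \Extend v(\sigma_2)$. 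Applying affineness twice (again with non-emptiness from Lemma~\ref{lem:ann/pre-exists-path}) collapses the join and yields the required inequality
\[
  f(r) \Extend \mu(p' \AutoTrans{\gamma'} q') \Extend \mu(q' \AutoTrans{\gamma''} q) \;\Leq\; \mu(p \AutoTrans{\gamma} q) .
\]

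The main obstacle I anticipate is the push-rule case: it is the only place where two independent path families are merged with a rule application and where affineness must be invoked twice, so one must be careful both about non-emptiness (which is exactly what Lemma~\ref{lem:ann/pre-exists-path} secures) and about the stack-extension argument that $\sigma_1$ can be replayed verbatim on the longer stack $\gamma'\gamma''$. Once this case is in hand, the minimality argument closes the proof immediately, and the affineness assumption is used precisely and only where joins need to commute with $\Extend$.
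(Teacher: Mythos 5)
Your proof is correct, but it takes a genuinely different route from the paper's. The paper proceeds by a double induction: an outer induction over the steps $i$ of the saturation procedure (tracking intermediate automata $\AutomatonC_i$ and their least solutions $\PreSolSymb_i$) and an inner induction over the steps $j$ of the Kleene iteration that recomputes the least solution after a constraint is added, verifying in each of the three constraint schemas that the iterate $\PreSolSymb_i^j$ stays below the join over all $\PdsRulesPre$-paths. You instead observe that the join-over-all-paths assignment $\mu$ is itself a prefixed point of the constraint functional $F$ --- i.e., a solution of the final constraint set --- and conclude $\PreSolSymb \Leq \mu$ by minimality of the least fixed point (which Lemma~\ref{lem:continuous} identifies with $\BigLub_n F^n(\bot)$, so that $F(\mu)\Leq\mu$ gives $F^n(\bot)\Leq\mu$ for all $n$ by monotonicity). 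The substantive content is the same in both arguments: affineness is invoked exactly where $\Extend$ must distribute over a join of path weights, Lemma~\ref{lem:ann/pre-exists-path} supplies the non-emptiness that affineness requires, and the push-rule case needs the stack-extension fact that a sequence realizing $\Ang{p',\gamma'}\PdsTransS{\sigma_1}\Ang{q',\epsilon}$ also realizes $\Ang{p',\gamma'\gamma''}\PdsTransS{\sigma_1}\Ang{q',\gamma''}$. What your approach buys is a much lighter bookkeeping burden: you never need to reason about how the least solution changes when a constraint is added mid-saturation, only about the final constraint set. What the paper's approach buys is an explicit demonstration that the bound is maintained along the ascending Kleene sequence, which is the point the authors flag in the main text as the place where completeness genuinely depends on computing the solution as the limit of that sequence; your argument makes the same dependence visible more abstractly, through the identification of the least solution with the least prefixed point.
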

\begin{proof}
  The proof is available in App.~\ref{app:ann/pre-completeness-transition}.
\end{proof}
This is also the place that we have used the fact that the solution is equal to
the least upper bound of the ascending Kleene sequence.

And now we can generalize the above to the case of a path in the automaton.
\begin{lemma}
  \label{lem:ann/pre-completeness-path}
  Consider a weighted pushdown system $\WPds = (\Pds, \FlowAlgebra, f)$ where
  $\FlowAlgebra$ is affine and a $\APreStarC$ automaton created by the
  saturation procedure.
  Moreover, let $\PreSolSymb$ be the least solution to the set of constraints
  $\Constraints$.
  For every path $\rho = q \AutoTransS{s} q'$ in this automaton we have that
  \[
    \PreSolP{q}{s}{\rho}{q'} \Leq \BigCombine \Set{ v(\sigma) \mid
      \Ang{q, s} \PdsTransS{\sigma} \Ang{q', \epsilon}, \sigma \in \PdsRulesPre^{*}
    }
  \]
\end{lemma}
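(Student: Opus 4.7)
The plan is to prove this by induction on the length of the path $\rho$, using Lemma~\ref{lem:ann/pre-completeness-transition} as the base case and the inductive definition of $\PreSolSymbP$ to drive the step.

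For the base case, when $\rho$ consists of a single transition $q \AutoTrans{\gamma} q'$, we have $\PreSolP{q}{s}{\rho}{q'} = \PreSol{q}{\gamma}{q'}$ by definition, and the claim reduces directly to Lemma~\ref{lem:ann/pre-completeness-transition}.

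For the inductive step, I would decompose $\rho = q \AutoTrans{\gamma} q'' \AutoTransS{s'} q'$ with $s = \gamma s'$, and let $\rho' = q'' \AutoTransS{s'} q'$. By the definition of $\PreSolSymbP$, we obtain
\[
  \PreSolP{q}{s}{\rho}{q'} \;=\; \PreSol{q}{\gamma}{q''} \Extend \PreSolP{q''}{s'}{\rho'}{q'}.
\]
Applying Lemma~\ref{lem:ann/pre-completeness-transition} to the first factor and the induction hypothesis to the second, and then using monotonicity of $\Extend$, gives an upper bound of the form
\[
  \Bigl(\BigCombine \{ v(\sigma_1) \mid \Ang{q, \gamma} \PdsTransS{\sigma_1} \Ang{q'', \epsilon} \}\Bigr) \Extend
  \Bigl(\BigCombine \{ v(\sigma_2) \mid \Ang{q'', s'} \PdsTransS{\sigma_2} \Ang{q', \epsilon} \}\Bigr).
\]
Here I invoke Lemma~\ref{lem:ann/pre-exists-path} to ensure both of these sets are non-empty, so that affinity of the flow algebra (distributivity of $\Extend$ over non-empty joins, applied on both sides) lets me rewrite the product as $\BigCombine \{ v(\sigma_1) \Extend v(\sigma_2) \mid \cdots \}$, which by the definition of $v$ equals $\BigCombine \{ v(\sigma_1 \sigma_2) \mid \cdots \}$.

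To finish, I would observe the standard pushdown closure property: whenever $\Ang{q, \gamma} \PdsTransS{\sigma_1} \Ang{q'', \epsilon}$, the same sequence yields $\Ang{q, \gamma s'} \PdsTransS{\sigma_1} \Ang{q'', s'}$, and composing with $\Ang{q'', s'} \PdsTransS{\sigma_2} \Ang{q', \epsilon}$ produces a derivation $\Ang{q, s} \PdsTransS{\sigma_1 \sigma_2} \Ang{q', \epsilon}$ with $\sigma_1 \sigma_2 \in \PdsRulesPre^{*}$. Hence every $v(\sigma_1 \sigma_2)$ appearing above is also in the set on the right-hand side of the lemma, and the desired inequality follows. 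The main subtle point is the use of affinity, which requires that neither of the two join-sets be empty; this is exactly what Lemma~\ref{lem:ann/pre-exists-path} is there to guarantee, and why completeness needs the stronger assumptions of Sec.~\ref{sec:completeness} rather than the mere monotonicity used for soundness.
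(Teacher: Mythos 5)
Your proof follows essentially the same route as the paper's: induction on $|\rho|$, with Lemma~\ref{lem:ann/pre-completeness-transition} handling the single-transition case, the inductive definition of $\PreSolSymbP$ splitting off the first transition, and affinity of $\Extend$ merging the two joins into a join over concatenated rule sequences. Your additional remarks --- explicitly invoking Lemma~\ref{lem:ann/pre-exists-path} for non-emptiness and spelling out the pushdown closure property for concatenation --- are correct and only make explicit what the paper leaves implicit.
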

\begin{proof}
  The proof is available in App.~\ref{app:ann/pre-completeness-path}.
\end{proof}

And finally, using both the Thm.~\ref{thm:ann/pre-soundness} and the above Lemma, we
can formulate the main result.
\begin{theorem}
  \label{thm:ann/pre-completeness}
  Consider an automaton $\APreStarC$ constructed by the saturation procedure and
  let $\PreSolSymb$ be the least solution to the set of its constraints
  $\Constraints$.
  If the flow algebra is affine then for every path
  $\rho = p \AutoTransS{s} q_f$ where $q_f \in F$ we have that
  \[
    \PreSolP{p}{s}{\rho}{q_f} = \BigCombine
    \Set{ v(\sigma) \mid \Ang{p, s} \PdsTransS{\sigma} \Ang{q_f, \epsilon},
                         \sigma \in \PdsRulesPre^{*}
    }
  \]
\end{theorem}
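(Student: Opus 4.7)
The plan is to sandwich the quantity $\PreSolP{p}{s}{\rho}{q_f}$ between the same join from above and below, using the two results already established in this section.

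For the upper bound, Lemma \ref{lem:ann/pre-completeness-path} applies directly to the fixed path $\rho = p \AutoTransS{s} q_f$, yielding
\[
  \PreSolP{p}{s}{\rho}{q_f} \Leq \BigCombine \Set{ v(\sigma) \mid
    \Ang{p, s} \PdsTransS{\sigma} \Ang{q_f, \epsilon}, \sigma \in \PdsRulesPre^{*}
  }.
\]
The affineness assumption on $\FlowAlgebra$ in force throughout this section is exactly the hypothesis needed by that lemma, so no additional work is required.

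For the lower bound, I would invoke Theorem \ref{thm:ann/pre-soundness}: for every $\sigma \in \PdsRulesPre^{*}$ with $\Ang{p,s} \PdsTransS{\sigma} \Ang{q_f,\epsilon}$, the soundness result delivers $v(\sigma) \Leq \PreSolP{p}{s}{\rho}{q_f}$. Completeness of the flow algebra guarantees that $\BigCombine \Set{ v(\sigma) \mid \ldots }$ exists, and since $\PreSolP{p}{s}{\rho}{q_f}$ is then an upper bound on that set, it dominates the least upper bound as well. Together with the previous inequality and antisymmetry of $\Leq$, this yields the claimed equality.

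The main subtlety — and essentially the only one — is to confirm that the bound supplied by Theorem \ref{thm:ann/pre-soundness} applies to the specific, universally quantified path $\rho$ fixed in the current statement, rather than to some existentially chosen witness path between the same endpoints. A quick inspection of the soundness proof in the appendix should settle this, since the argument is a straightforward induction on $\sigma$ that tracks a concrete path built transition-by-transition from the saturation steps. Should the available form of soundness turn out to be merely existential, an intermediate observation would close the gap: the uniform upper bound from Lemma \ref{lem:ann/pre-completeness-path}, combined with the existence of at least one accepting path achieving the join, forces every accepting path between $p$ and $q_f$ reading $s$ to carry the same solution value, which is then equal to the join. Either way the final equality follows with only a few lines beyond the two invoked results.
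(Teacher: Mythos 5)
Your proposal is correct and follows exactly the paper's own argument: the appendix proof of Theorem~\ref{thm:ann/pre-completeness} likewise just combines the upper bound from Lemma~\ref{lem:ann/pre-completeness-path} with the lower bound from Theorem~\ref{thm:ann/pre-soundness} and concludes by antisymmetry. Your extra care about whether the soundness bound attaches to the specific path $\rho$ is a reasonable observation, and your fallback via the uniform upper bound resolves it without changing the structure of the argument.
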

\begin{proof}
  The proof is available in App.~\ref{app:ann/pre-completeness}.
\end{proof}

%
%

\subsection{$\PostStar$}

Consider a pushdown system $\Pds$ with pushdown rules $\PdsRules$ and a regular
set of  configurations $C$ with an automaton $\Automaton$ that accepts $C$.
First let us define a small modification of the pushdown rules $\PdsRules$.
Each rule $r$ of the form
\[
  \Ang{p, \gamma} \PdsRule{} \Ang{p', \gamma_1 \gamma_2}
\]
can be ``split'' into two rules $r_1$ and $r_2$:
\begin{align*}
  r_1 & = \Ang{p, \gamma} \PdsRule{} \Ang{q_{p', \gamma_1}, \gamma_2} \\
  r_2 & = \Ang{q_{p', \gamma_1}, \epsilon} \PdsRule{} \Ang{p', \gamma_1}
\end{align*}
with weights $f(r_1) = f(r)$ and $f(r_2) = \One$.
Note that the second rule is not really a pushdown rule as defined earlier.
Fortunately, all we need to do, is to redefine $\PdsTrans{}$ in the following
way:
\begin{align*}
  & \text{if} \quad r = \Ang{q, \gamma} \PdsRule{} \Ang{q', w}
  && \text{then} &&
  \forall w' \in \Gamma^{*} : \Ang{q, \gamma s} \PdsTrans{} \Ang{q', w s}
  \\
  & \text{if} \quad r = \Ang{q, \epsilon} \PdsRule{} \Ang{q', \gamma}
  && \text{then} &&
  \forall w' \in \Gamma^{*} : \Ang{q, s} \PdsTrans{} \Ang{q', \gamma s}
\end{align*}
This does not change the pushdown system in any way.
Since we add a fresh state, there is no danger of changing any paths except for
the ones we intend to.
Moreover, the weight remains the same ($\One$ is neutral element for $\Extend$,
so $f(r_1) \Extend f(r_2) = f(r)$).

Therefore, in place of $\PdsRulesPost$ we will use $\PdsRulesPostP$, which is
defined as follows:
\begin{itemize}
  \item For every $q' \AutoTransR{\gamma} q$ in $\Automaton$ we have a rule
    $r = \Ang{q', \epsilon} \PdsRule{} \Ang{q, \gamma}$ in $\PdsRulesPostP$
    such that $f(r) = \One$.
  \item For every $r \in \PdsRules$ of the form
    $r = \Ang{p, \gamma} \PdsRule{} \Ang{p', \gamma_1 \gamma_2}$ there are $r_1$
    and $r_2$ in $\PdsRulesPostP$ as described above.
  \item All other rules of $\PdsRules$ are included in $\PdsRulesPostP$
    without any modification.
\end{itemize}
So compared to $\PdsRulesPost$ the only difference is that we split the
push-rules into two separate rules.
At the same time we do not change the behavior of the system in any way.

This allows us to prove the following lemma, which is used in subsequent
proofs.
\begin{lemma}
  \label{lem:ann/post-exists-path}
  For every transition $q' \AutoTransR{\gamma_\epsilon} q$
  ($\gamma_\epsilon \in \Gamma \cup \{ \epsilon \}$)
  in $\APostStar$ there exists a sequence $\sigma$ of pushdown rules in
  $\PdsRulesPostP$ such that
  $\Ang{q', \epsilon} \PdsTransS{\sigma} \Ang{q, \gamma_\epsilon}$.
\end{lemma}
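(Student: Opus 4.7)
The plan is to proceed by induction on the stage of the saturation procedure at which the transition $q' \AutoTransR{\gamma_\epsilon} q$ is introduced into $\APostStar$.

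For the base case, consider a transition already present in the initial automaton $\Automaton$, so $\gamma_\epsilon = \gamma \in \StackAlphabet$. The first clause of the definition of $\PdsRulesPostP$ provides the rule $r = \Ang{q', \epsilon} \PdsRule \Ang{q, \gamma}$, so taking $\sigma = [r]$ gives $\Ang{q', \epsilon} \PdsTrans{r} \Ang{q, \gamma}$ directly.

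For the inductive step, split on which of the three saturation clauses in the $\PostStar$ algorithm added the transition. Each clause is triggered by some pushdown rule $r$ and some path $\rho = q \AutoTransRE{\gamma} p$, where by construction $\rho$ is either a single transition $q \AutoTransR{\gamma} p$ or a two-step path $q \AutoTransR{\gamma} q'' \AutoTransR{\epsilon} p$, all of whose constituent transitions were added strictly earlier. Applying the induction hypothesis to those earlier transitions and concatenating, one obtains a sequence $\sigma_\rho \in \PdsRulesPostP^{*}$ with $\Ang{q, \epsilon} \PdsTransS{\sigma_\rho} \Ang{p, \gamma}$. Here I use that every rule in $\PdsRulesPostP$ is stack-polymorphic: rules of the form $\Ang{-, \gamma'} \PdsRule \Ang{-, w}$ preserve any suffix below $\gamma'$ by the standard PDS semantics, and the auxiliary rules $\Ang{-, \epsilon} \PdsRule \Ang{-, \gamma'}$ fire regardless of the current stack by the modified relation introduced just before $\PdsRulesPostP$. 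I then extend $\sigma_\rho$ by firing the triggering rule. In Case 1 the rule $r = \Ang{p, \gamma} \PdsRule \Ang{p', \epsilon}$ gives the required $\Ang{q, \epsilon} \PdsTransS{} \Ang{p', \epsilon}$, matching $q \AutoTransR{\epsilon} p'$; Case 2 is analogous with $\gamma'$ in place of $\epsilon$.

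The main obstacle is Case 3, and it is precisely what motivates working with $\PdsRulesPostP$ rather than $\PdsRulesPost$. Here the push-rule $r = \Ang{p, \gamma} \PdsRule \Ang{p', \gamma' \gamma''}$ adds two transitions, $q \AutoTransR{\gamma''} q_{p', \gamma'}$ and $q_{p', \gamma'} \AutoTransR{\gamma'} p'$, and a single composite rule would not give us a pushdown-system derivation ending in the intermediate configuration $\Ang{q_{p', \gamma'}, \gamma''}$. Using the two split rules $r_1 = \Ang{p, \gamma} \PdsRule \Ang{q_{p', \gamma'}, \gamma''}$ and $r_2 = \Ang{q_{p', \gamma'}, \epsilon} \PdsRule \Ang{p', \gamma'}$ in $\PdsRulesPostP$, one firing of $r_1$ after $\sigma_\rho$ yields $\Ang{q, \epsilon} \PdsTransS{} \Ang{q_{p', \gamma'}, \gamma''}$ for the first transition, and $r_2$ alone witnesses the second transition $q_{p', \gamma'} \AutoTransR{\gamma'} p'$. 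Verifying that these split rules remain applicable in arbitrary stack contexts, and that no earlier path through $q_{p', \gamma'}$ is altered by their introduction (fresh-state argument), completes the induction.
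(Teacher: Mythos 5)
Your proof is correct and follows essentially the same route as the paper's: induction on the saturation step, with the base case coming from the first clause of $\PdsRulesPostP$ and the inductive step split over the three rule forms, using the split rules $r_1, r_2$ to witness the two transitions produced by a push rule. Your explicit treatment of the two-step $\AutoTransRE{\gamma}$ path (applying the hypothesis to each constituent transition and concatenating via stack-polymorphism) is a point the paper glosses over, but it is the same argument in substance.
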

\begin{proof}
  The proof is available in App.~\ref{app:ann/post-exists-path}.
\end{proof}

Again, as in the case of $\PreStar$ we first establish the result for a single
transition in the automaton.
\begin{lemma}
  \label{lem:ann/post-completeness-transition}
  Consider a weighted pushdown system $\WPds = (\Pds, \FlowAlgebra, f)$ where
  $\FlowAlgebra$ is affine and an automaton $\APostStarC$ created by the
  saturation procedure.
  Moreover, let $\PostSolSymb$ be the least solution to the set of constraints
  $\Constraints$.
  For every transition $q' \AutoTransR{\gamma_\epsilon} q$
  ($\gamma_\epsilon \in \Gamma \cup \{ \epsilon \}$)
  in this automaton we have that
  \[
  \PostSol{q}{\gamma_\epsilon}{q'} \Leq \BigCombine \Set{ v(\sigma) \mid
    \Ang{q', \epsilon} \PdsTransS{\sigma} \Ang{q, \gamma_\epsilon}, \sigma \in
    \PdsRulesPostP^{*} }
  \]
\end{lemma}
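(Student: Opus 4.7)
The plan is to mirror the Kleene-iteration argument used for the $\PreStar$ transition completeness lemma. Since $F$ is continuous (Lemma~\ref{lem:continuous}), the least solution satisfies $\PostSolSymb = \BigLub_{n \in \Naturals} F^{n}(\bot)$, so it suffices to prove by induction on $n$ that for every transition $t = q' \AutoTransR{\gamma_\epsilon} q$ of $\APostStar$,
\[
  F^{n}(\bot)(t) \Leq \BigCombine \Set{ v(\sigma) \mid \Ang{q', \epsilon} \PdsTransS{\sigma} \Ang{q, \gamma_\epsilon}, \ \sigma \in \PdsRulesPostP^{*} }.
\]
Lemma~\ref{lem:ann/post-exists-path} guarantees that the right-hand set is non-empty, which is what lets us invoke affineness of $\Extend$ over arbitrary non-empty joins when we pass to the limit.

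The base case $n = 0$ is trivial since $F^{0}(\bot)(t) = \Zero$ is the least element of the join-semilattice. For the inductive step, expand
\[
  F^{n+1}(\bot)(t) = \BigCombine_{c \in \Constraints_t} \Lhs_{F^{n}(\bot)}(c),
\]
and bound each summand separately (this is legitimate because $\Combine$ is monotonic and the set of constraints $\Constraints_t$ is finite). For every constraint $c \in \Constraints_t$ there is exactly one generation step of the saturation procedure responsible for it, so I case-split according to that step: the initial constraint $\One \Leq \PostCon{q}{\gamma}{q'}$ coming from a transition of $\Automaton$, and the three constraints coming from pop, swap, and push rules of $\PdsRules$. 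In the initial-transition case, the pushdown rule $\Ang{q', \epsilon} \PdsRule \Ang{q, \gamma}$ added to $\PdsRulesPostP$ already has weight $\One$ and witnesses the bound. In each of the rule-based cases, the left-hand side is a product of at most two $\PostConSymbE$-terms (each corresponding to one transition or an at-most-two-transition path involving an $\epsilon$-step) multiplied with $f(r)$; I apply the induction hypothesis to each factor, then use affineness to push the product inside the joins, obtaining a join of expressions of the form $v(\sigma_1) \Extend \cdots \Extend v(\sigma_k) \Extend f(r) = v(\sigma_1 \cdots \sigma_k \, r)$.

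The remaining piece is to check that each concatenation $\sigma_1 \cdots \sigma_k \, r$ actually witnesses $\Ang{q', \epsilon} \PdsTransS{} \Ang{q, \gamma_\epsilon}$ in the $\PdsRulesPostP$ semantics. For the pop rule $r = \Ang{p, \gamma} \PdsRule \Ang{p', \epsilon}$ with path $\rho = q \AutoTransRE{\gamma} p$, the IH gives a sequence $\sigma_\rho$ realizing $\Ang{q, \epsilon} \PdsTransS{\sigma_\rho} \Ang{p, \gamma}$, and appending $r$ yields $\Ang{q, \epsilon} \PdsTransS{} \Ang{p', \epsilon}$, matching the new transition $q \AutoTransR{\epsilon} p'$. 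For the swap rule the argument is identical with $\gamma'$ in place of $\epsilon$. For the push rule $r = \Ang{p, \gamma} \PdsRule \Ang{p', \gamma' \gamma''}$, I crucially use the split $r_1, r_2 \in \PdsRulesPostP$: the constraint on the new transition $q \AutoTransR{\gamma''} q_{p', \gamma'}$ corresponds to $\sigma_\rho \cdot r_1$ acting as $\Ang{q, \epsilon} \PdsTransS{} \Ang{q_{p', \gamma'}, \gamma''}$, while the constraint $\One \Leq \PostCon{p'}{\gamma'}{q_{p', \gamma'}}$ on the fresh transition $q_{p', \gamma'} \AutoTransR{\gamma'} p'$ is witnessed by the single rule $r_2$ of weight $\One$.

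The main obstacle I expect is the bookkeeping around the abbreviation $\PostConSymbE$: when $\rho = q \AutoTransR{\gamma} q'' \AutoTransR{\epsilon} p$, the product $\PostCon{q''}{\gamma}{q} \Extend \PostCon{p}{\epsilon}{q''}$ requires two separate applications of the induction hypothesis (one to a standard transition and one to an $\epsilon$-transition), and the resulting pushdown sequences must be composed in the correct order so that the induced derivation $\Ang{q, \epsilon} \PdsTransS{} \Ang{q'', \gamma} \PdsTransS{} \Ang{p, \gamma}$ ends in the right configuration. Getting the order of multiplication right, and making sure that the affineness-based distribution over possibly infinite joins does not accidentally drop the non-emptiness condition, is the delicate point; everywhere else it is essentially forward induction plus concatenation of witnesses.
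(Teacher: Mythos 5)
Your proof is correct and rests on the same essential ingredients as the paper's: continuity of $F$ (Lemma~\ref{lem:continuous}) to justify Kleene iteration, the non-emptiness Lemma~\ref{lem:ann/post-exists-path} to license the use of affineness over the joins on the right-hand sides, a case analysis over the shapes of the generated constraints (initial transition, pop, swap, and push with the $r_1,r_2$ split in $\PdsRulesPostP$), two applications of the induction hypothesis when $\PostConSymbE$ expands to a product of two transition variables, and concatenation of the witnessing rule sequences. Where you differ is in the induction scaffolding. The paper runs a double induction: an outer induction on the number $i$ of saturation steps, maintaining the claim for the least solution $\PostSolSymb_i$ of the partial constraint set, and an inner induction on the number $j$ of Kleene steps used to repair the least solution after a constraint is added, starting from $\PostSolSymb_i$ extended by $\Zero$ on any newly added transition. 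You instead fix the final constraint set $\Constraints$ once and for all and do a single induction on $n$ in $F^{n}(\bot)$, concluding via $\PostSolSymb = \BigLub_{n \in \Naturals} F^{n}(\bot)$. Your organization is the cleaner of the two: it sidesteps the (unaddressed in the paper) question of why restarting the iteration from the previous least solution still converges to the least solution of the enlarged system, and the induction hypothesis is available uniformly for every transition of the final automaton, which is all the case analysis needs; the price is that you must cite Lemma~\ref{lem:ann/post-exists-path} for the final automaton rather than tracking path existence alongside the saturation steps, but that lemma is stated for $\APostStar$ anyway. The delicate points you flag --- the ordering of factors under the reversed-arrow convention and keeping the non-emptiness hypothesis attached to every application of affineness --- are exactly the places where the paper's proof also has to be careful, and your treatment of them is sound.
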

\begin{proof}
  The proof is available in App.~\ref{app:ann/post-completeness-transition}.
\end{proof}
And again, as in the case of $\PreStar$, this is the place that we have used the
fact that the solution is equal to the least upper bound of the ascending Kleene
sequence.

Now we can generalize the obtained result for the paths in the automaton.
\begin{lemma}
  \label{lem:ann/post-completeness-path}
  Consider a weighted pushdown system $\WPds = (\Pds, \FlowAlgebra, f)$ where
  $\FlowAlgebra$ is affine and a $\APostStarC$ automaton created by the saturation
  procedure.
  Moreover, let $\PostSolSymb$ be the least solution to the set of constraints
  $\Constraints$.
  For every path $\rho = q' \AutoTransR{s} q$
  ($s \in \Gamma^{*}$) in this automaton we have that
  \[
  \PostSolP{q}{s}{\rho}{q'} \Leq \BigCombine \Set{ v(\sigma) \mid
    \Ang{q', \epsilon} \PdsTransS{\sigma} \Ang{q, s}, \sigma \in \PdsRulesPostP^{*} }
  \]
\end{lemma}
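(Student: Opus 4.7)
The plan is to proceed by induction on the number of reverse-arrow transitions in $\rho$, following the recursive structure of $\PostSolSymbP$. The base case, where $\rho$ consists of a single transition $q' \AutoTransR{\gamma} q$ (so $s = \gamma$), follows directly from Lemma~\ref{lem:ann/post-completeness-transition}, since by definition $\PostSolP{q}{s}{\rho}{q'} = \PostSol{q'}{\gamma}{q}$.

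For the inductive step I decompose $\rho$ as $\rho = \rho' \cdot (q'' \AutoTransR{\gamma} q)$ with $\rho' = q' \AutoTransRS{s'} q''$ and $s = \gamma s'$, so that by definition $\PostSolP{q}{s}{\rho}{q'} = \PostSolP{q''}{s'}{\rho'}{q'} \Extend \PostSol{q''}{\gamma}{q}$. Applying the induction hypothesis to $\rho'$ and Lemma~\ref{lem:ann/post-completeness-transition} to the final edge bounds each of the two factors by a $\BigCombine$ over computations with sequences in $\PdsRulesPostP^{*}$. Both witness sets are non-empty by Lemma~\ref{lem:ann/post-exists-path}, so the affine property of $\FlowAlgebra$ allows me to push $\Extend$ inside both joins, yielding an upper bound of $\BigCombine_{\sigma_1, \sigma_2} v(\sigma_1) \Extend v(\sigma_2) = \BigCombine_{\sigma_1, \sigma_2} v(\sigma_1 \sigma_2)$, where $\sigma_1$ witnesses $\Ang{q', \epsilon} \PdsTransS{\sigma_1} \Ang{q'', s'}$ and $\sigma_2$ witnesses $\Ang{q'', \epsilon} \PdsTransS{\sigma_2} \Ang{q, \gamma}$.

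The remaining step, and the main technical point, is to verify that every concatenation $\sigma_1 \sigma_2$ actually witnesses $\Ang{q', \epsilon} \PdsTransS{\sigma_1 \sigma_2} \Ang{q, s}$. This reduces to the standard stack-preservation property for pushdown systems: from $\Ang{q'', \epsilon} \PdsTransS{\sigma_2} \Ang{q, \gamma}$ one obtains $\Ang{q'', s'} \PdsTransS{\sigma_2} \Ang{q, \gamma s'}$ by leaving $s'$ untouched beneath the stack throughout the computation. This is immediate for ordinary rules that only inspect the stack top, but requires a brief check for the $\epsilon$-rules in $\PdsRulesPostP$ (both those derived from transitions of $\Automaton$ and the auxiliary rules obtained by splitting push rules): by the modified transition relation, these apply regardless of what lies beneath, so the lifting is valid. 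Composing with $\sigma_1$ then gives $\Ang{q', \epsilon} \PdsTransS{\sigma_1 \sigma_2} \Ang{q, s}$, placing each $v(\sigma_1 \sigma_2)$ in the target set and closing the induction.
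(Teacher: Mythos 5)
Your proof is correct and follows essentially the same route as the paper's: induction on the length of $\rho$, peeling off the last reverse transition, bounding the two factors via Lemma~\ref{lem:ann/post-completeness-transition} and the induction hypothesis, and using affineness (with non-emptiness from Lemma~\ref{lem:ann/post-exists-path}) to combine the joins. Your explicit verification that concatenated witnesses $\sigma_1\sigma_2$ really drive $\Ang{q',\epsilon}$ to $\Ang{q,s}$ (the stack-lifting step) is a point the paper leaves implicit, so if anything your write-up is slightly more careful.
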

\begin{proof}
  The proof is available in App.~\ref{app:ann/post-completeness-path}.
\end{proof}

And finally using both the soundness Thm.~\ref{thm:ann/post-soundness} and the
above, we can establish the main result.
\begin{theorem}
  \label{thm:ann/post-completeness}
  Consider an automaton $\APostStarC$ constructed by the saturation procedure and
  let $\PostSolSymb$ be the least solution to the set of its constraints
  $\Constraints$.
  If the flow algebra is affine then for every path
  $\rho = q_f \AutoTransRS{s} p$ where $q_f \in F$ we have that
  \[
  \PostSolP{p}{s}{\rho}{q_f} = \BigCombine \Set{ v(\sigma) \mid
    \Ang{q_f, \epsilon} \PdsTransS{\sigma} \Ang{p, s}, \sigma \in \PdsRulesPostP^{*}
  }
  \]
\end{theorem}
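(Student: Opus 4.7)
My plan is to establish the equality by proving the two inequalities separately, using Lemma~\ref{lem:ann/post-completeness-path} for one direction and the soundness Theorem~\ref{thm:ann/post-soundness} for the other, and then reconciling the slight discrepancy in how the two results are phrased.

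First I would derive
\[
\PostSolP{p}{s}{\rho}{q_f} \Leq \BigCombine \Set{ v(\sigma) \mid \Ang{q_f, \epsilon} \PdsTransS{\sigma} \Ang{p, s},\ \sigma \in \PdsRulesPostP^{*} }
\]
by a direct application of Lemma~\ref{lem:ann/post-completeness-path} to the path $\rho = q_f \AutoTransRS{s} p$. Since all the hypotheses of that lemma are exactly those we have here, it hands us this half of the claim with no further work.

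For the converse direction, I would invoke Theorem~\ref{thm:ann/post-soundness}: for every $\sigma \in \PdsRulesPost^{*}$ witnessing $\Ang{q_f, \epsilon} \PdsTransS{\sigma} \Ang{p, s}$ it yields $v(\sigma) \Leq \PostSolP{p}{s}{\rho}{q_f}$, and taking the join over all such $\sigma$ delivers
\[
\BigCombine \Set{ v(\sigma) \mid \Ang{q_f, \epsilon} \PdsTransS{\sigma} \Ang{p, s},\ \sigma \in \PdsRulesPost^{*} } \Leq \PostSolP{p}{s}{\rho}{q_f}.
\]
Here the join is well-defined because the flow algebra is assumed complete and affine, so arbitrary joins exist.

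The remaining step is to replace $\PdsRulesPost^{*}$ by $\PdsRulesPostP^{*}$ on the left-hand side, and I expect this bookkeeping to be the main obstacle of the proof. The argument is the one already sketched just before Lemma~\ref{lem:ann/post-exists-path}: each push rule $r \in \PdsRulesPost$ is replaced in $\PdsRulesPostP$ by two consecutive rules $r_1, r_2$ with $f(r_1) \Extend f(r_2) = f(r)$, and since the auxiliary state $q_{p', \gamma_1}$ is fresh and appears nowhere else in the system, any derivation that uses $r_1$ is forced to use $r_2$ as its immediately next step. This sets up a weight-preserving bijection between the sequences of $\PdsRulesPost^{*}$ and those of $\PdsRulesPostP^{*}$ that realise $\Ang{q_f, \epsilon} \PdsTransS{} \Ang{p, s}$, so the two joins coincide. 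Combining the three observations gives the claimed equality.
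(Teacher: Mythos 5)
Your proof is correct and follows essentially the same route as the paper, which simply combines Theorem~\ref{thm:ann/post-soundness} (for the lower bound on $\PostSolSymbP$) with Lemma~\ref{lem:ann/post-completeness-path} (for the upper bound). The only difference is that you explicitly spell out the weight-preserving correspondence between $\PdsRulesPost^{*}$ and $\PdsRulesPostP^{*}$ derivations, a bookkeeping step the paper leaves implicit in its discussion preceding Lemma~\ref{lem:ann/post-exists-path}.
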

\begin{proof}
  The proof is available in App.~\ref{app:ann/post-completeness}.
\end{proof}

\section{Discussion and examples}
\label{sec:discussion}
In this section we will discuss the relation of our development to the area of
interprocedural analysis, as well as the challenges and advantages of the
approach.
Furthermore, we will present an example of analyses that thanks to our algorithms
are directly expressible in our framework, which was not possible before.

\subsection{Monotone frameworks and pushdown systems}

To put our approach into perspective, it is useful to emphasize that it is a
generalization of the functional approach to interprocedural analysis by Sharir
and Pnueli~\cite{bib:sharir-pnueli}.
In both of these approaches the underlying idea is to compute the summarizations
of actions and by composing them obtain the summarizations of procedures.
The generality of weighted pushdown systems stems from the fact that they make
it possible to obtain the analysis information for specific calling contexts or
even families of calling contexts.
In other words one can perform queries of weighted $\APreStar$ and $\APostStar$
automata, to get the summarization of all the paths between the initial set of
configurations and a given stack or even a regular set of stacks.
Applying the summarization to some initial analysis information, we can obtain
the desired result.
This is possible due to the way the algorithms for pushdown systems construct
the $\APreStar$ and $\APostStar$ automata and generate the constraints whose
solution provides us with the weights of all the transition in those automata.

One of the most significant advantages of using summarizations is the fact that
each procedure can be analyze only once and the result can be used at all the
call sites.
In other words the summarization of a procedure is independent of the calling
context, which is the key to reusing the information.
However, there is also a downside to this approach, namely the fact that the
analysis has to work on the dataflow transformers and not directly on some
dataflow facts (i.e., we compute what and how the dataflow facts can change).
This often makes it more difficult to formulate analyses whose results we can
actually compute.
The main challenge is that if some domain $D$ satisfies, e.g., the ascending
chain condition, when lifted to transformers $ D \rightarrow D $ it might not
satisfy this condition anymore.
Fortunately we can still express many analyses.
Even for cases like constant propagation where $D$ is usually a mapping from
variables to integers/reals, it is possible to define computable variants, i.e.,
copy- and linear-constant propagation~\cite{bib:wpds1,bib:wpds2}.
Obviously whenever $D$ is finite then $D \rightarrow D$ will be finite as well.
This might seem a bit restrictive, but there are many analyses that satisfy the
requirement.
In fact the interprocedural analysis based on graph
reachability~\cite{bib:reps-graph} works on distributive functions
$\Powerset(D) \rightarrow \Powerset(D)$ where $D$ is required to be some finite
set.

\subsection{Example}

As an example let us consider the family of forward, may analyses that are
instances of bitvector framework.
They are generally defined in the following way:
\begin{itemize}
  \item The lattice $L$ is equal to $\Powerset(D)$ for some finite $D$.
  \item The least upper bound operator is $\bigcup$.
  \item The transfer functions are monotone functions of the shape
    \[
      f_i(l) = (l \setminus k_i) \cup g_i
    \]
    where $ k_i, g_i \in \Powerset(D) $ correspond to the elements of $D$ that
    are ``killed'' and ``generated'' at some program point $i$.
    This is also the source of a popular name for similar analyses ---
    ``kill/gen'' analyses.
  \item The least element $ \bot = \emptyset $.
\end{itemize}
In order to use such an analyses with weighted pushdown systems we will
construct a flow algebra $ (\FlowAlgebra, \Combine, \Extend, \Zero, \One) $
that expresses the transformers
$ \Powerset(D) \rightarrow \Powerset(D) $.
Since we are dealing with ``kill/gen'' analysis, this is actually quite easy ---
we express a function $ f_ i(l) = (l \setminus k_i) \cup g_i $ by a pair
$(k_i, g_i)$.
Therefore, we have:
\begin{itemize}
  \item $ \FlowAlgebra = \Powerset(D) \times \Powerset(D) $
  \item The $\Combine$ operator is defined as
    \[
      f_1 \Combine f_2
      = (k_1, g_1) \Combine (k_2, g_2)
      = (k_1 \cap k_2, g_1 \cup g_2)
    \]
  \item The $\Extend$ operator is defined as
    \[
      f_1 \Extend f_2
      = (k_1, g_1) \Extend (k_2, g_2)
      = (k_1 \cup k_2, (g_1 \setminus k_2) \cup g_2)
    \]
  \item $ \Zero = (D, \emptyset) $
  \item $ \One = (\emptyset, \emptyset) $
\end{itemize}
It should be easy to see that $\Combine$ is idempotent and commutative.
Therefore, the semiring is naturally ordered with
$ f_1 \Leq f_2 \iff f_1 \Combine f_2 = f_2 $.
Furthermore, $\Zero$ is a neutral element for $\Combine$ and $\One$ is neutral
for $\Extend$.

However, the interesting part is that $\Zero$ is not an annihilator for
$\Extend$.
Consider the following:
\begin{align*}
  (D, \emptyset) \Extend (k, g)
  & = (D \cup k, (\emptyset \setminus k) \cup g)
  \\
  & = (D, g)
\end{align*}
which clearly is not equal to $\Zero$ (unless $ g = \emptyset $).
Interestingly the annihilation works from the right:
\begin{align*}
  (k, g) \Extend (D, \emptyset)
  & = (k \cup D, (g \setminus D) \cup \emptyset)
  \\
  & = (D, \emptyset)
  \\
  & = \Zero
\end{align*}
This makes perfect sense if we consider for a moment the classical transfer
functions of such analyses.
If we extend the ordering of $\Powerset(D)$ pointwise to the monotone functions
$ \Powerset(D) \rightarrow \Powerset(D) $, the least element will be a function
that always returns $\emptyset$, i.e., $ f_\bot = \lambda l . \emptyset $.
Clearly we have that
\[
  \forall f : f_\bot \circ f = f_\bot
\]
but in the second case
\[
  \neg(\forall f : f \circ f_\bot = f_\bot)
\]
Therefore, such analyses do not directly fit in the in the original framework of
WPDS or CPDS\@.
Yet they do in our modified one that relaxes the requirement of annihilation.

\section{Conclusions}
\label{sec:conclusions}
Weighted/communicating pushdown systems have been used in many contexts and are
a popular approach to interprocedural analysis.
However, their requirements with respect to the abstract domain were quite
restrictive and did not admit some of the classical analyses directly.
In this paper we have shown that some of the restrictions are not necessary.
We have achieved that by reformulating the algorithms for backward and forward
reachability.
Furthermore, we have proved that they are sound --- they always provide a safe
over-approximation of the join over all valid paths solution.
Provided some additional properties of the abstract domain, we have also shown
that those solutions coincide, i.e., the algorithms are complete.

We believe that our results strengthen the connection between the monotone
frameworks and the pushdown systems by making it possible to directly express
more analyses based on monotone frameworks in the setting of pushdown systems.
Moreover, the development does provide some additional flexibility when both
designing and implementing analyses using pushdown systems.
For instance, the annihilation property might be useful for certain analyses,
but now this is the choice of the designer of the analysis and not a hard
requirement from the framework.
Last, but not least, we believe that the paper improves the understanding of
using weighted pushdown systems for interprocedural program analysis.

%
%

\bibliographystyle{splncs}
\bibliography{article}

\newpage
\appendix

\section{Soundness proofs}
\label{app:ann/soundness-proofs}
\subsection{Proof of Thm.~\ref{thm:ann/pre-soundness}}
\label{app:ann/pre-soundness}

Consider an automaton $\Automaton$ and its corresponding $\APreStarC$ generated
by the saturation procedure. Let us assume that we have the least solution
$\PreSolSymb$ to the set of constraints $\Constraints$. Then for each pair $(p, s)$
such that $\Ang{p, s} \PdsTransS{\sigma} \Ang{q_f, \epsilon}$ (where $\sigma \in
\PdsRulesPre^{*}$ and $q_f \in F$), we have
$v(\sigma) \Leq \PreSolSymbP(\rho)$ where
$\rho = p \AutoTransS{s} q_f$ is in $\APreStar$.

\begin{proof}
Note that we do not need to prove the existence of the paths in the $\APreStar$
--- it is a previously known result~\cite{bib:schwoon-thesis,bib:wpds1}.
We can use it because our algorithm differs only in the constraint generation,
and not in the way new transitions are added.
Moreover, as explained above, the additional rules in $\PdsRulesPre$ do not
change that result.

The proof will proceed by induction on $|\sigma|$ (note that since $P$ and $F$
are disjoint, it is not possible to have $|\sigma| = 0$).
\begin{description}
\item[$|\sigma| = 1$]
  We know that the path in the pushdown system is
  $\Ang{p, \gamma} \PdsTrans{r} \Ang{q_f, \epsilon}$.
  But this means that $r \in \PdsRulesPre \setminus \PdsRules$.
  Existence of $p \AutoTrans{\gamma} q_f$ follows directly from the definition of
  $\PdsRulesPre$. We also have that $f(r) = \One$.
  Finally, according to the saturation procedure there exists a constraint:
  $\One \Leq \PreCon{p}{\gamma}{q_f}$.
  Therefore, clearly $v([r]) \Leq \PreSol{p}{\gamma}{q}$.

\item[$|\sigma| > 1$]
  In this case we know that the path in the pushdown system is
  \[
  \Ang{p, \gamma s_0} \PdsTrans{r} \Ang{q', w s_0} \PdsTransS{\sigma'} \Ang{q_f, \epsilon}
  \]
  for some $q', \gamma,$ and $w$.
  Moreover, $r = \Ang{p, \gamma} \PdsRule{} \Ang{q', w}$ where $s = \gamma s_0$.

  If $q' \not \in P$ then
  $r \in \PdsRulesPre \setminus \PdsRules$ and $f(r) = \One$ ($r$ is one of the
  added rules to $\PdsRulesPre$).
  Furthermore, all the rules of $\sigma'$ must also be in
  $\PdsRulesPre \setminus \PdsRules$ and thus there must be a path
  $\rho = p \AutoTransS{s} q_f$ in $\APreStar$ (since it must also be in
  $\Automaton$).
  Therefore, $v(\sigma) = \One$ and for each transition $t$ on the path
  $\rho$ we have a constraint of the form $\One \Leq \PreConSymb(t)$, thus by
  monotonicity we have $v(\sigma) \Leq \PreSolSymbP(\rho)$.

  Otherwise $q' \in P$ and $r \in \PdsRules$, so we can use the induction
  hypothesis to get that
  \[
  v(\sigma') \Leq \PreSolP{q'}{w s_0}{\rho'}{q_f}
  \]
  where
  \[
  \begin{array}{cccccc}
  & \multicolumn{3}{c}{\overbrace{\hspace{5em}}^{\rho'_1}} & & \\
  \rho' = & q' & \AutoTransS{w} & q'' & \AutoTransS{s_0} & q_f \\
  & & & \multicolumn{3}{c}{\underbrace{\hspace{5em}}_{\rho'_2}}
  \end{array}
  \]
  Now the saturation procedure must have added the transition $p \AutoTrans{\gamma}
  q''$. So we have a path $\rho = p \AutoTrans{\gamma} q'' \AutoTransS{s_0} q_f$
  along with a constraint:
  \begin{enumerate}
  \item \label{lem:ann/safety-case1}
    if $w = \epsilon$ (so $q' = q''$) the added constraint is
    \[
    f(r) \Leq \PreCon{p}{\gamma}{q'}
    \]
  \item \label{lem:ann/safety-case2}
    if $w = \gamma'$ the added constraint is
    \[
    f(r) \Extend \PreCon{q'}{\gamma'}{q''} \Leq \PreCon{p}{\gamma}{q''}
    \]
  \item \label{lem:ann/safety-case3}
    if $w = \gamma_1' \gamma_2'$ the added constraint is
    \[
    f(r) \Extend \PreCon{q'}{\gamma_1'}{q_x}
          \Extend \PreCon{q_x}{\gamma_2'}{q''} \Leq \PreCon{p}{\gamma}{q''}
    \]
  \end{enumerate}
  For case~\ref{lem:ann/safety-case1} we have:
  \begin{align*}
  v(\sigma) & =    f(r) \Extend v(\sigma') \\
            & \Leq f(r) \Extend \PreSolP{q''}{s_0}{\rho_2'}{q_f} \\
            & \Leq \PreSol{p}{\gamma}{q'} \Extend \PreSolP{q''}{s_0}{\rho_2'}{q_f} \\
            & =    \PreSolP{p}{s}{\rho}{q_f} \\
  \end{align*}
  And for both~\ref{lem:ann/safety-case2} and~\ref{lem:ann/safety-case3}:
  \begin{align*}
  v(\sigma) & =    f(r) \Extend v(\sigma') \\
            & \Leq f(r) \Extend \PreSolP{q'}{w}{\rho_1'}{q''}
                        \Extend \PreSolP{q''}{s_0}{\rho_2'}{q_f} \\
            & \Leq \PreSolP{p}{\gamma}{q'}
                \Extend \PreSolP{q'}{w}{\rho_1'}{q''}
                \Extend \PreSolP{q''}{s_0}{\rho_2'}{q_f} \\
            & =    \PreSolP{p}{s}{\rho}{q_f} \\
  \end{align*}

  Thus in all possible cases we have that:
  \[
    v(\sigma) \Leq \PreSolP{p}{s}{\rho}{q_f}
  \]
\end{description}
\qed
\end{proof}

\subsection{Proof of Thm.~\ref{thm:ann/post-soundness}}
\label{app:ann/post-soundness}

Consider an automaton $\Automaton$ and its corresponding $\APostStarC$ generated
by the saturation procedure. Let us assume that we have the least solution
$\PostSolSymb$ to the set of constraints $\Constraints$. Then for each pair $(p, s)$
such that $\Ang{q_f, \epsilon} \PdsTransS{\sigma} \Ang{p, s}$ (where $\sigma \in
\PdsRulesPost^{*}$ and $q_f \in F$), we have
$v(\sigma) \Leq \PostSolSymbP(\rho)$ where
$\rho = q_f \AutoTransRS{s} p$ is in $\APostStarC$.

\begin{proof}
Note that, as in the case of $\PreStar$, we do not need to prove the existence
of the paths in the $\APreStar$ --- it is a previously known
result~\cite{bib:schwoon-thesis,bib:wpds1}.
Again this is due to the fact that our algorithm differs only in the constraint
generation, and not in the way new transitions are added.
Moreover, as explained above, the additional rules in $\PdsRulesPost$ do not
change that result.

The proof will proceed by induction on $|\sigma|$ (note that since $P$ and $F$
are disjoint, it is not possible to have $|\sigma| = 0$).
\begin{description}
\item[$|\sigma| = 1$]
  So $s = \gamma$ and we have $\Ang{q_f, \epsilon} \PdsTrans{r} \Ang{p, s}$. We
  know that $r \in \PdsRulesPost \setminus \PdsRules$, and so from the
  definition of $\PdsRulesPost$ we have that there is transition
  $q_f \AutoTransR{\gamma} p$ and $v([r]) = \One$. Moreover, from the saturation
  procedure we have a constraint $\One \Leq \PostCon{p}{\gamma}{q_f}$.
  Therefore, $v([r]) \Leq \PostSol{p}{\gamma}{q_f}$.

\item[$|\sigma| > 1$]
  So we have
  \[
  \Ang{q_f, \epsilon} \PdsTransS{\sigma'} \Ang{q', s'} \PdsTrans{r} \Ang{p, s}
  \]
  where $\sigma = \sigma' r$.

  If $q' \not \in P$ then
  $r \in \PdsRulesPost \setminus \PdsRules$ and
  it must be of the form
  $r = \Ang{q', \epsilon} \PdsRule{} \Ang{p, \gamma}$ where $s = \gamma s'$ ($r$
  is one of the additional rules to the $\PdsRulesPost$).
  But that means that all the remaining rules in $\sigma'$ must also be one of
  those additional rules ($\PdsRulesPost \setminus \PdsRules$).
  Thus the weight of every transition $t$ on the path $q_f \AutoTransRS{s} p$ is
  $\PostSolSymb(t) = \One$ (its existence follows directly from the definition
  of $\PdsRulesPost$).
  Moreover, all of them must have a corresponding constraint of the form
  $\One \Leq \PostConSymb(t)$.
  Therefore, by monotonicity we have $\One \Leq \PostSolP{p}{s}{\rho}{q_f}$
  and so $v(\sigma) \Leq \PostSolP{p}{s}{\rho}{q_f}$.

  Otherwise $q' \in P$ and $r \in \PdsRules$,
  $r = \Ang{q', \gamma'} \PdsRule{} \Ang{p, w}$ and $s = w s_0$,
  $s' = \gamma' s_0$.
  Since $|\sigma'| < |\sigma|$ we can use the induction hypothesis to get that
  \[
  v(\sigma') \Leq \PostSolP{q'}{s'}{\rho'}{q_f}
  \]
  where
  \[
  \begin{array}{cccccc}
  & \multicolumn{3}{c}{\overbrace{\hspace{5em}}^{\rho'_2}} & & \\
  \rho' = q_f \AutoTransRS{s'} q' = & q_f & \AutoTransRS{s_0} & q'' & \AutoTransRE{\gamma'} & q' \\
  & & & \multicolumn{3}{c}{\underbrace{\hspace{5em}}_{\rho'_1}}
  \end{array}
  \]
  for some $q''$.
  And so we have three possibilities, depending on $w$:
  \begin{enumerate}
  \item if $w = \epsilon$, the transition $q'' \AutoTransR{\epsilon} p$ along with
    the following constraint
    \[
    \PostConPE{q'}{\gamma'}{\rho_1'}{q''}
      \Extend f(r) \Leq \PostCon{p}{\epsilon}{q''}
    \]
    Therefore, the solution will have to satisfy:
    \[
    \PostSolP{q'}{\gamma'}{\rho_1'}{q''}
      \Extend f(r) \Leq \PostSol{p}{\epsilon}{q''}
    \]
    and so
    \begin{align*}
    v(\sigma)
      & =       v(\sigma') \Extend f(r) \\
      & \Leq    \PostSolP{q'}{s'}{\rho'}{q_f} \Extend f(r) \\
      & =       \PostSolP{q''}{s_0}{\rho_2'}{q_f}
        \Extend \PostSolP{q'}{\gamma'}{\rho_1'}{q''}
        \Extend f(r) \\
      & \Leq    \PostSolP{q''}{s_0}{\rho_2'}{q_f}
        \Extend \PostSol{p}{\epsilon}{q''} \\
      & =       \PostSolP{p}{s}{\rho}{q_f}
    \end{align*}
  \item if $w = \gamma$, the transition $q'' \AutoTransR{\gamma} p$ along with
    the following constraint
    \[
    \PostConPE{q'}{\gamma'}{\rho_1'}{q''}
      \Extend f(r) \Leq \PostCon{p}{\gamma}{q''}
    \]
    Therefore, the solution will have to satisfy:
    \[
    \PostSolP{q'}{\gamma'}{\rho_1'}{q''}
      \Extend f(r) \Leq \PostSol{p}{\gamma}{q''}
    \]
    and so
    \begin{align*}
    v(\sigma)
      & =       v(\sigma') \Extend f(r) \\
      & \Leq    \PostSolP{q'}{s'}{\rho'}{q_f} \Extend f(r) \\
      & =       \PostSolP{q''}{s_0}{\rho_2'}{q_f}
        \Extend \PostSolP{q'}{\gamma'}{\rho_1'}{q''}
        \Extend f(r) \\
      & \Leq    \PostSolP{q''}{s_0}{\rho_2'}{q_f}
        \Extend \PostSol{p}{\gamma}{q''} \\
      & =       \PostSolP{p}{s}{\rho}{q_f}
    \end{align*}
  \item if $w = \gamma_1 \gamma_2$, the transitions
    $q_{p, \gamma_1} \AutoTransR{\gamma_1} q'$ and
    $q'' \AutoTransR{\gamma_2} q_{p, \gamma_1}$ along with
    the following constraints
    \[
    \One \Leq \PostCon{q}{\gamma_1}{q_{p, \gamma_1}}
    \]
    and
    \[
    \PostConPE{q'}{\gamma'}{\rho_1'}{q''}
      \Extend f(r) \Leq \PostCon{q_{p, \gamma_1}}{\gamma_2}{q''}
    \]
    Therefore, the solution will have to satisfy:
    \[
    \One \Leq \PostSol{q'}{\gamma_1}{q_{p, \gamma_1}}
    \]
    \[
    \PostSolP{q'}{\gamma'}{\rho_1'}{q''}
      \Extend f(r) \Leq \PostSol{q_{p, \gamma_1}}{\gamma_2}{q''}
    \]
    and so
    \begin{align*}
    v(\sigma)
      & =       v(\sigma') \Extend f(r) \\
      & \Leq    \PostSolP{q'}{s'}{\rho'}{q_f} \Extend f(r) \\
      & =       \PostSolP{q''}{s_0}{\rho_2'}{q_f}
        \Extend \PostSolP{q'}{\gamma'}{\rho_1'}{q''}
        \Extend f(r) \\
      & \Leq    \PostSolP{q''}{s_0}{\rho_2'}{q_f}
        \Extend \PostSol{p}{\gamma}{q''} \\
      & =       \PostSolP{q''}{s_0}{\rho_2'}{q_f}
        \Extend \PostSol{q_{p, \gamma_1}}{\gamma_2}{q''}
        \Extend \PostSol{q'}{\gamma_1}{q_{p, \gamma_1}} \\
      & =       \PostSolP{p}{s}{\rho}{q_f}
    \end{align*}
  \end{enumerate}
\end{description}
\qed
\end{proof}

\section{Continuity proof (Lem.~\ref{lem:continuous})}
\label{app:ann/continuity-proof}
The function $F$, defined as:
\begin{align*}
  & F : (\Transitions \rightarrow D) \rightarrow (\Transitions \rightarrow D)
  \\
  & F(m) t = \BigCombine_{c \in \Constraints_t} \Lhs_m(c)
\end{align*}
is continuous, i.e, for any non-empty chain $Y$:
  \[
    F(\BigLub Y) = \BigLub_{m \in Y} F(m)
  \]
\begin{proof}
  Since we are assuming that $D$ is a complete lattice and $m$ is a total
  function, then $ \Transitions \rightarrow D $ defines a complete lattice as
  well.
  Furthermore, we have that for any $ Y \subseteq \Transitions \rightarrow D $
  \begin{equation}
    \label{eq:functional-lub-property}
    (\BigLub Y) t = \BigCombine_{m \in Y} m(t)
  \end{equation}

  Therefore, we have:

  \begin{align*}
    & F(\BigLub Y) t
    \\
    & \qquad = \ProofEx{definition of $F$}
    \\
    & \BigCombine \Set{ lhs_{\BigLub Y}(c) \mid c \in \Constraints_t }
    \\
    & \qquad = \ProofEx{equation \eqref{eq:functional-lub-property}}
    \\
    & \BigCombine \Set{ lhs_{\lambda t' . \BigCombine_{m \in Y} m(t')}(c)
                   \mid c \in \Constraints_t }
    \\
    & \qquad = \ProofEx{$D$ is affine, $Y$ is not empty and
                        the constraints are finite}
    \\
    & \BigCombine \Set{ \BigCombine_{m \in Y} lhs_m(c) \mid c \in \Constraints_t }
    \\
    & \qquad = \ProofEx{$D$ is a complete lattice}
    \\
    & \BigCombine_{m \in Y} (\BigCombine \Set{ lhs_m(c) \mid c \in \Constraints_t })
    \\
    & \qquad = \ProofEx{definition of $F$}
    \\
    & \BigCombine_{m \in Y} F(m) t
    \\
    & \qquad = \ProofEx{equation \eqref{eq:functional-lub-property}}
    \\
    & (\BigLub_{m \in Y} F(m)) t
  \end{align*}
  \qed
\end{proof}

\section{Completeness proofs}
\label{app:ann/completeness-proofs}
\subsection{Proof of Lem.~\ref{lem:ann/pre-exists-path}}
\label{app:ann/pre-exists-path}

For every transition $q \AutoTrans{\gamma} q'$ in $\APreStar$ there exists a
sequence $\sigma \in \PdsRulesPre$ such that
$\Ang{q, \gamma} \PdsTransS{\sigma} \Ang{q', \epsilon}$.

\begin{proof}
Proof will proceed by induction on $\Automaton_i$, where $\Automaton_i$
corresponds to the initial automaton after $i$ steps of the saturation
procedure.
\begin{description}
\item[$i = 0$]
  Follows from the definition of $\PdsRulesPre$.
\item[$i > 0$]
  We assume the property holds for $\Automaton_i$ and prove it for
  $\Automaton_{i+1}$.
  Consider that the saturation procedure adds a transition $p_s \AutoTrans{\gamma} q_d$
  (note that the saturation procedure works on $\PdsRules$) because of:
  \begin{itemize}
  \item a pushdown rule $r = \Ang{p_s, \gamma} \PdsRule{} \Ang{q_d, \epsilon}$.
    The result is immediate from the rule.
  \item a pushdown rule $r = \Ang{p_s, \gamma} \PdsRule{} \Ang{p', \gamma'}$
    and a transition $p' \AutoTrans{\gamma'} q_d$ in $\Automaton_i$.
    We use the induction hypothesis on $p' \AutoTrans{\gamma'} q_d$ and get that
    there exists $\sigma$ such that
    $\Ang{p', \gamma'} \PdsTransS{\sigma} \Ang{q_d, \epsilon}$.
    But then we also have that
    \[
    \Ang{p_s, \gamma} \PdsTrans{r} \Ang{p', \gamma'} \PdsTransS{\sigma} \Ang{q_d, \epsilon}
    \]
  \item a pushdown rule $r = \Ang{p_s, \gamma} \PdsRule{} \Ang{p', \gamma' \gamma''}$
    and a path $p' \AutoTrans{\gamma'} q'' \AutoTrans{\gamma''} q_d$ in $\Automaton_i$.
    We use the induction hypothesis on $p' \AutoTrans{\gamma'} q''$ and
    $q'' \AutoTrans{\gamma''} q_d$
    to get that there exists $\sigma'$ and $\sigma''$ such that
    $\Ang{p', \gamma'} \PdsTransS{\sigma'} \Ang{q'', \epsilon}$
    and
    $\Ang{q'', \gamma'} \PdsTransS{\sigma''} \Ang{q_d, \epsilon}$.
    And again we have that:
    \[
    \Ang{p_s, \gamma} \PdsTrans{r} \Ang{q', \gamma' \gamma''}
                      \PdsTransS{\sigma' \sigma''} \Ang{q_d, \epsilon}
    \]
  \end{itemize}
\end{description}
\qed
\end{proof}

\subsection{Proof of Lem.~\ref{lem:ann/pre-completeness-transition}}
\label{app:ann/pre-completeness-transition}

Consider a weighted pushdown system $\WPds = (\Pds, \FlowAlgebra, f)$ where
$\FlowAlgebra$ is affine and an automaton $\APreStarC$ created by the saturation
procedure.
For every transition $q \AutoTrans{\gamma} q'$ in this automaton we have that
\[
\PreSol{q}{\gamma}{q'} \Leq \BigCombine \Set{ v(\sigma) \mid
  \Ang{q, \gamma} \PdsTransS{\sigma} \Ang{q', \epsilon}, \sigma \in \PdsRulesPre^{*} }
\]

\begin{proof}
Let us also denote by $\AutomatonC_i$ the automaton $\Automaton$ after $i$ steps
of the saturation procedure. Also let us denote the least solution for
$\AutomatonC_i$ by $\PostSolSymb_i$. We will prove by induction on $i$ that for
every transition $q \AutoTrans{\gamma} q'$ in $\AutomatonC_i$ we have that
\[
\PreSolC{_i}{q}{\gamma}{q'} \Leq \BigCombine \Set{ v(\sigma) \mid
  \Ang{q, \gamma} \PdsTransS{\sigma} \Ang{q', \epsilon}, \sigma \in \PdsRulesPre^{*} }
\]

\begin{description}
\item[$i = 0$] $\AutomatonC_0$ is just the initial automaton $\Automaton$ with
  the set $\Constraints$ containing one constraint for every transition of
  $\Automaton$. The property clearly holds.

\item[$i > 0$] We assume the property holds for $\AutomatonC_i$ and prove it for
  $\AutomatonC_{i+1}$, i.e., prove that adding a constraint (and maybe a
  transition as well) preserves the property of interest.

  Let $t$ be the transition that the added constraint refers to.
  Observe that if $t$ was already in the automaton $\AutomatonC_i$, then it is
  possible that $\PreSolSymb(t)$ might be on the left-hand side of some other
  constraint.
  Therefore, the least solution for the new set of constraints might be different
  for other transitions as well; in other words the value/information from the
  new constraint might have to be propagated throughout other constraints to get
  $\PostSolSymb_{i+1}$.
  Now let $\PostSolSymb_{i}^{j}$ denote the solution after $j$ steps of fixed
  point computation with the new constraint, starting with
  \[
  \PreSolSymb_{i}^0(t) =
    \begin{cases}
      \Zero              & \text{if $t$ was added} \\
      \PreSolSymb_{i}(t) & \text{otherwise ($t$ was in $\AutomatonC_i$)}
    \end{cases}
  \]
  Using induction on $j$ we will prove
  that the property of interest is maintained by the computation.

  Note that we can use here Kleene iteration due to Lemma~\ref{lem:continuous}.

  \begin{description}
  \item[$j=0$] Immediate from outer induction hypothesis.
  \item[$j>0$] In the following we will use the fact that the flow algebra is
    affine; it is enough for our purposes because from
    Lemma~\ref{lem:ann/pre-exists-path} it follows that the sets (of pushdown
    paths) on the right-hand sides are not empty. Let us consider each form of
    the possible constraints:
    \begin{itemize}
    \item $f(r) \Leq \PreSol{q}{\gamma}{q'}$ where
      $r = \Ang{q, \gamma} \PdsRule{} \Ang{q', \epsilon}$.
      We know that
      \[
      \PreSolC{_i^{j+1}}{q}{\gamma}{q'} = \PreSolC{_i^j}{q}{\gamma}{q'} \Combine f(r)
      \]
      Moreover, from the rule $r$ it immediately follows that
      \[
      f(r) \Leq \BigCombine \Set{ v(\sigma) \mid
        \Ang{q, \gamma} \PdsTransS{\sigma} \Ang{q', \epsilon} }
      \]
      Using this and the induction hypothesis on $\PreSolC{_i^j}{q}{\gamma}{q'}$
      \[
      \PreSolC{_i^{j+1}}{q}{\gamma}{q'} \Leq \BigCombine \Set{ v(\sigma) \mid
        \Ang{q, \gamma} \PdsTransS{\sigma} \Ang{q', \epsilon} }
      \]
    \item $f(r) \Extend \PreSol{q''}{\gamma''}{q'} \Leq \PreSol{q}{\gamma}{q'}$
      where $r = \Ang{q, \gamma} \PdsRule{} \Ang{q'', \gamma''}$ and
      $q'' \AutoTrans{\gamma''} q'$. We have that
      \[
      \PreSolC{_i^{j+1}}{q}{\gamma}{q'} = \PreSolC{_i^j}{q}{\gamma}{q'}
        \Combine ( f(r) \Extend \PreSolC{_i^j}{q''}{\gamma''}{q'} )
      \]
      Now let us use the induction hypothesis:
      \[
      \PreSolC{_i^j}{q''}{\gamma''}{q'} \Leq \BigCombine \Set{ v(\sigma) \mid
        \Ang{q'', \gamma''} \PdsTransS{\sigma} \Ang{q', \epsilon} }
      \]
      Multiplying both sides by $f(r)$ and using that $\Extend$ is
      affine:
      \begin{align*}
      f(r) \Extend \PreSolC{_i^j}{q''}{\gamma''}{q'}
        & \Leq \BigCombine \Set{ f(r) \Extend v(\sigma) \mid
          \Ang{q'', \gamma''} \PdsTransS{\sigma} \Ang{q', \epsilon} }
          \\
        & \Leq \BigCombine \Set{ v(\sigma) \mid
          \Ang{q, \gamma} \PdsTransS{\sigma} \Ang{q', \epsilon} }
      \end{align*}
      Therefore:
      \[
      \PreSolC{_i^{j+1}}{q}{\gamma}{q'} \Leq \BigCombine \Set{ v(\sigma) \mid
        \Ang{q, \gamma} \PdsTransS{\sigma} \Ang{q', \epsilon} }
      \]
    \item $f(r) \Extend \PreSol{q''}{\gamma_1''}{q_1'}
                \Extend \PreSol{q_1'}{\gamma_2''}{q'}
                \Leq    \PreSol{q}{\gamma}{q'}$
      where
      $r = \Ang{q, \gamma} \PdsRule{} \Ang{q'', \gamma_1'' \gamma_2''}$ and
      $q'' \AutoTrans{\gamma_1''} q_1' \AutoTrans{\gamma_2''} q'$.
      We have that
      \begin{align*}
      \PreSolC{_i^{j+1}}{q}{\gamma}{q'}
        & = \PreSolC{_i^j}{q}{\gamma}{q'} \\
        & \Combine ( f(r) \Extend \PreSolC{_i^j}{q''}{\gamma_1''}{q_1'}
                          \Extend \PreSolC{_i^j}{q_1'}{\gamma_2''}{q'} )
      \end{align*}
      We use the induction hypothesis twice to get
      \begin{align*}
      \PreSolC{_i^j}{q''}{\gamma_1''}{q_1'}
        & \Leq \BigCombine \Set{ v(\sigma) \mid
          \Ang{q'', \gamma_1''} \PdsTransS{\sigma} \Ang{q_1', \epsilon} }
        \\
      \PreSolC{_i^j}{q_1'}{\gamma_2''}{q'}
        & \Leq \BigCombine \Set{ v(\sigma) \mid
          \Ang{q_1', \gamma_2''} \PdsTransS{\sigma} \Ang{q', \epsilon} }
        \\
      \end{align*}
      From monotonicity and the fact that $\Extend$ is affine we get that:
      \begin{align*}
      & f(r) \Extend \PreSolC{_i^j}{q''}{\gamma_1''}{q_1'}
             \Extend \PreSolC{_i^j}{q_1'}{\gamma_2''}{q'}
        \\
      & \qquad \Leq
        \BigCombine \Set{ f(r) \Extend v(\sigma_1) \Extend v(\sigma_2) \mid
          \Ang{q'', \gamma_1''} \PdsTransS{\sigma_1} \Ang{q_1', \epsilon},
          \Ang{q_1', \gamma_2''} \PdsTransS{\sigma_2} \Ang{q', \epsilon} }
        \\
      & \qquad \Leq
        \BigCombine \Set{ f(r) \Extend v(\sigma) \mid
          \Ang{q'', \gamma_1'' \gamma_2''} \PdsTransS{\sigma} \Ang{q', \epsilon} }
        \\
      & \qquad \Leq
        \BigCombine \Set{ v(\sigma) \mid
          \Ang{q, \gamma} \PdsTransS{\sigma} \Ang{q', \epsilon} }
      \end{align*}
      Therefore
      \[
      \PreSolC{_i^{j+1}}{q}{\gamma}{q'} \Leq \BigCombine \Set{ v(\sigma) \mid
        \Ang{q, \gamma} \PdsTransS{\sigma} \Ang{q', \epsilon} }
      \]
    \end{itemize}
  \end{description}

\end{description}
\qed
\end{proof}

\subsection{Proof of Lem.~\ref{lem:ann/pre-completeness-path}}
\label{app:ann/pre-completeness-path}

Consider a weighted pushdown system $\WPds = (\Pds, \FlowAlgebra, f)$ where
$\FlowAlgebra$ is affine and a $\APreStarC$ automaton created by the saturation
procedure. For every path $\rho = q \AutoTransS{s} q'$ in this automaton we have
that
\[
\PreSolP{q}{s}{\rho}{q'} \Leq \BigCombine \Set{ v(\sigma) \mid
  \Ang{q, s} \PdsTransS{\sigma} \Ang{q', \epsilon}, \sigma \in \PdsRulesPre^{*} }
\]

\begin{proof}
The proof will proceed with the induction on the number of transitions
$|\rho|$ (we will use the inductive definition of $\PreSolSymbP$).
\begin{description}
\item[$|\rho| = 1$] So $\rho$ is just a single transition, therefore according
  to the definition of $\PreSolSymb$ we have
  \[
  \PreSolP{q}{s}{\rho}{q'} = \PreSol{q}{s}{q'}
  \]
  The result follows from Lemma~\ref{lem:ann/pre-completeness-transition}.
\item[$1 < |\rho|$] Again using the definition of $\PreSolSymb$ we have
  \[
  \PreSolP{q}{s}{\rho}{q'} =
    \PreSol{q}{\gamma}{q''} \Extend \PreSolP{q''}{s'}{\rho'}{q'}
  \]
  where $s = \gamma s'$, $q'' \in Q$, and
  \[
  \rho = q \AutoTrans{\gamma} \underbrace{q'' \AutoTransS{s'} q'}_{\rho'}
  \]
  Now we can use Lemma~\ref{lem:ann/pre-completeness-transition} again and
  the induction hypothesis (since $|\rho|' < |\rho|$) to get:
  \begin{align*}
  \PreSol{q}{\gamma}{q''}
    & \Leq
    \BigCombine \Set{ v(\sigma) \mid
      \Ang{q, \gamma} \PdsTransS{\sigma} \Ang{q'', \epsilon}, \sigma \in \PdsRulesPre
    } \\
  \PreSolP{q''}{s'}{\rho'}{q'}
    & \Leq
    \BigCombine \Set{ v(\sigma) \mid
      \Ang{q'', s'} \PdsTransS{\sigma} \Ang{q', \epsilon}, \sigma \in \PdsRulesPre
    }
  \end{align*}
  Finally, we use the fact that the flow algebra is affine:
  \begin{align*}
  & \PreSolP{q}{s}{\rho}{q'} \\
    & \Leq
    \BigCombine \Set{ v(\sigma) \Extend v(\sigma') \mid
      \Ang{q, \gamma} \PdsTransS{\sigma} \Ang{q'', \epsilon},
      \Ang{q'', s'} \PdsTransS{\sigma'} \Ang{q', \epsilon},
      \sigma, \sigma' \in \PdsRulesPre
    } \\
    & \Leq
    \BigCombine \Set{ v(\sigma) \mid
      \Ang{q, s} \PdsTransS{\sigma} \Ang{q', \epsilon}, \sigma \in \PdsRulesPre
    }
  \end{align*}
\end{description}
\qed
\end{proof}

\subsection{Proof of Thm.~\ref{thm:ann/pre-completeness}}
\label{app:ann/pre-completeness}

Consider an automaton $\APreStarC$ constructed by the saturation procedure and
the least solution $\PreSolSymb$ to the set of its constraints $\Constraints$.
If the flow algebra is affine then for every path
$\rho = p \AutoTransS{s} q_f$ where $q_f \in F$ we have that
\[
  \PreSolP{p}{s}{\rho}{q_f} = \BigCombine
  \Set{ v(\sigma) \mid \Ang{p, s} \PdsTransS{\sigma} \Ang{q_f, \epsilon},
                       \sigma \in \PdsRulesPre^{*}
  }
\]

\begin{proof}
The result follows directly from Theorem~\ref{thm:ann/pre-soundness}
and Lemma~\ref{lem:ann/pre-completeness-path}.
\qed
\end{proof}

%
%

\subsection{Proof of Lem.~\ref{lem:ann/post-exists-path}}
\label{app:ann/post-exists-path}

For every transition $q' \AutoTransR{\gamma_\epsilon} q$
($\gamma_\epsilon \in \Gamma \cup \{ \epsilon \}$)
in $\APostStar$ there exists a sequence $\sigma$ of pushdown rules in
$\PdsRulesPostP$ such that
$\Ang{q', \epsilon} \PdsTransS{\sigma} \Ang{q, \gamma_\epsilon}$.

\begin{proof}
Let us denote by $\Automaton_i$ the automaton $\Automaton$ after $i$ steps of
the saturation procedure. Proof will proceed by induction on $i$.
\begin{description}
\item[$i = 0$]
  Follows from the definition of $\PdsRulesPostP$.
\item[$i > 0$]
  We assume the property holds for $\Automaton_i$ and prove it for
  $\Automaton_{i+1}$.
  Consider that the saturation procedure\footnote{Note that the saturation
  procedure works on $\PdsRules$.}
  \begin{itemize}
  \item adds a transition $q_d \AutoTransR{\epsilon} p_s$ because of a pushdown rule
    $r = \Ang{p', \gamma'} \PdsRule{} \Ang{p_s, \epsilon}$ and a path
    $q_d \AutoTransRE{\gamma'} p'$. We can use the induction hypothesis to get
    that there exists $\sigma$ such that
    $\Ang{q_d, \epsilon} \PdsTransS{\sigma} \Ang{p', \gamma'}$. But then clearly
    $\Ang{q_d, \epsilon} \PdsTransS{\sigma} \Ang{p', \gamma'}
                         \PdsTrans{r} \Ang{p, \epsilon}$.
  \item adds a transition $q_d \AutoTransR{\gamma} p_s$ because of a pushdown rule
    $r = \Ang{p', \gamma'} \PdsRule{} \Ang{p_s, \epsilon}$ and a path
    $q_d \AutoTransRE{\gamma'} p'$. We can use the induction hypothesis to get
    that there exists $\sigma$ such that
    $\Ang{q_d, \epsilon} \PdsTransS{\sigma} \Ang{p', \gamma'}$.
    Again it is clear that
    $\Ang{q_d, \epsilon} \PdsTransS{\sigma} \Ang{p', \gamma'}
                         \PdsTrans{r} \Ang{p, \epsilon}$.
  \item adds $q_{p_s, \gamma_1} \AutoTransR{\gamma_1} p_s$ and
    $q_d \AutoTransR{\gamma_2} q_{p_s, \gamma_1}$
    because of a pushdown rule
    $r = \Ang{p', \gamma'} \PdsRule{} \Ang{p_s, \gamma_1 \gamma_2 }$ and a
    path $q_d \AutoTransRE{\gamma'} p'$. According to the definition of
    $\PdsRulesPostP$ we know that there are
    $r_1 = \Ang{p', \gamma'} \PdsRule{} \Ang{q_{p_s, \gamma_1}, \gamma_2}$
    and
    $r_2 = \Ang{q_{p_s, \gamma_1}, \epsilon} \PdsRule{} \Ang{p_s, \gamma_1}$.
    So we immediately have the path for the first transition:
    \[
    \Ang{q_{p_s, \gamma_1}, \epsilon} \PdsTrans{r_2} \Ang{p_s, \gamma_1}
    \]
    Moreover, we can use the induction hypothesis to get that there exists
    $\sigma$ such that
    $\Ang{q_d, \epsilon} \PdsTransS{\sigma} \Ang{p', \gamma'}$
    and so we also have that
    \[
    \Ang{q_d, \epsilon} \PdsTransS{\sigma} \Ang{p', \gamma'}
                         \PdsTrans{r_1} \Ang{q_{p_s, \gamma_1}, \gamma_2}
    \]
  \end{itemize}
\end{description}
\qed
\end{proof}

\subsection{Proof of Lem.~\ref{lem:ann/post-completeness-transition}}
\label{app:ann/post-completeness-transition}

Consider a weighted pushdown system $\WPds = (\Pds, \FlowAlgebra, f)$ where
$\FlowAlgebra$ is affine and an automaton $\APostStarC$ created by the
saturation procedure.
For every transition $q' \AutoTransR{\gamma_\epsilon} q$
($\gamma_\epsilon \in \Gamma \cup \{ \epsilon \}$)
in this automaton we have that
\[
\PostSol{q}{\gamma_\epsilon}{q'} \Leq \BigCombine \Set{ v(\sigma) \mid
  \Ang{q', \epsilon} \PdsTransS{\sigma} \Ang{q, \gamma_\epsilon}, \sigma \in
  \PdsRulesPostP^{*} }
\]

\begin{proof}
Let us denote by $\AutomatonC_i$ the automaton $\AutomatonC$ after $i$ steps
of saturation procedure and similarly the least solution for
it by $\PostSolSymb_i$.
We will prove by induction on $i$ that for every transition $q'
\AutoTransR{\gamma_\epsilon} q$ we have that
\[
\PostSolC{_i}{q}{\gamma_\epsilon}{q'} \Leq \BigCombine \Set{ v(\sigma) \mid
  \Ang{q', \epsilon} \PdsTransS{\sigma} \Ang{q, \gamma_\epsilon}, \sigma \in
  \PdsRulesPostP^{*} }
\]

\begin{description}
\item[$i = 0$]
  The only constraints are of the form $\One \Leq l(t)$ where $t$ is a
  transition in $\Automaton$.
  Therefore, the least solution for each $t$ is $\PostSolSymb_i(t) = \One$.
  We also know that for every
  $r \in \PdsRulesPostP \setminus \PdsRules$, $f(r) = \One$.
  So the right hand side is at least $\One$.
  Thus our property holds.

\item[$i > 0$]
  We assume the property holds for $\AutomatonC_i$ and prove it for
  $\AutomatonC_{i+1}$, i.e., prove that adding a constraint (and maybe a
  transition as well) preserves the property of interest.

  Let $t$ bi the transition that the added constraint refers to.
  Observe that if $t$ was already in the automaton $\AutomatonC_i$, then it is
  possible that $\PostConSymb(t)$ might be on the left-hand side of some other
  constraint.
  Therefore, the least solution for the new set of constraints might be different
  for other transitions as well; in other words the value/information from the
  new constraint might have to be propagated throughout other constraints
  to get $\PostSolSymb_{i+1}$.
  Now let $\PostSolSymb_{i}^{j}$ denote the solution after $j$ steps of fixed
  point computation with the new constraint, starting with
  \[
  \PostSolSymb_{i}^0(t) =
    \begin{cases}
      \Zero               & \mbox{if $t$ was added} \\
      \PostSolSymb_{i}(t) & \mbox{otherwise ($t$ was in $\AutomatonC_i$)}
    \end{cases}
  \]
  Using induction on $j$ we will prove that the property is maintained by the
  computation.

  Note that we can use here Kleene iteration due to Lemma~\ref{lem:continuous}.

  \begin{description}
  \item[$j=0$] Immediate from outer induction hypothesis.
  \item[$j>0$]
    We assume the property hold for $\PostSolSymb_i^j$ and prove that it also
    holds for $\PostSolSymb_i^{j+1}$.
    In the following we use the fact that the flow algebra is affine, this is
    enough since from Lemma~\ref{lem:ann/post-exists-path} it follows that the
    sets (of pushdown paths) on the right hand sides are not empty.
    Let us consider three possibilities of constraints:
    \begin{itemize}
    \item if the constraint is
      \[
      \PostCon{p'}{\gamma'}{q} \Extend f(r) \Leq \PostCon{p}{\epsilon}{q}
      \]
      or
      \[
      \PostCon{q''}{\gamma'}{q} \Extend \PostCon{p'}{\epsilon}{q''}
                           \Extend f(r) \Leq \PostCon{p}{\epsilon}{q}
      \]
      where
      $r = \Ang{p', \gamma'} \PdsRule{} \Ang{p, \epsilon} \in \PdsRules$.
      Let us only consider the more complex case with additional $\epsilon$
      transition (the one without is similar).
      We need to calculate the value of $\PostSolC{_i^{j+1}}{p}{\epsilon}{q}$
      --- it should be its old value combined with the new one
      \[
      \PostSolC{_i^{j+1}}{p}{\epsilon}{q} = \PostSolC{_i^j}{p}{\epsilon}{q}
                                   \Combine \left(
                                            \PostSolC{_i^j}{q''}{\gamma'}{q}
                                            \Extend \PostSolC{_i^j}{p'}{\epsilon}{q''}
                                            \Extend f(r)
                                            \right)
      \]
      Let us use the induction hypothesis (inner induction) three times to get:
      \begin{align*}
      \PostSolC{_i^j}{p}{\epsilon}{q} & \Leq \BigCombine \Set{ v(\sigma) \mid
          \Ang{q, \epsilon} \PdsTransS{\sigma} \Ang{p, \epsilon}, \sigma \in
          \PdsRulesPostP^{*} }
          \\
      \PostSolC{_i^j}{q''}{\gamma'}{q} & \Leq \BigCombine \Set{ v(\sigma) \mid
          \Ang{q, \epsilon} \PdsTransS{\sigma} \Ang{q'', \gamma'}, \sigma \in
          \PdsRulesPostP^{*} }
          \\
      \PostSolC{_i^j}{p'}{\epsilon}{q''} & \Leq \BigCombine \Set{ v(\sigma) \mid
          \Ang{q'', \epsilon} \PdsTransS{\sigma} \Ang{p', \epsilon}, \sigma \in
          \PdsRulesPostP^{*} }
          \\
      \end{align*}
      Using the above and the fact that our flow algebra is affine, we get:
      \begin{align*}
      & \PostSolC{_i^j}{q''}{\gamma'}{q} \Extend \PostSolC{_i^j}{p'}{\epsilon}{q''}
                                         \Extend f(r)  \\
      & \quad \Leq \BigCombine
          \Set{ v(\sigma_1) \Extend v(\sigma_2) \Extend f(r) \mid
              \Ang{q, \epsilon}  \PdsTransS{\sigma_1} \Ang{q'', \gamma'},
            \\
            & \hspace{50mm} \Ang{q'', \epsilon} \PdsTransS{\sigma_2} \Ang{p', \epsilon},
            \\
            & \hspace{50mm} \Ang{p', \gamma'} \PdsTrans{r} \Ang{p, \epsilon}, \\
            & \hspace{50mm}           \sigma \in \PdsRulesPostP^{*} }
         \\
      & \quad \Leq \BigCombine
          \Set{ v(\sigma) \mid
            \Ang{q, \epsilon} \PdsTransS{\sigma} \Ang{p, \epsilon},
            \sigma \in \PdsRulesPostP^{*} }
      \end{align*}
      Now since $\BigCombine$ gives the least upper bound, we have that
      \[
      \PostSolC{_i^{j+1}}{p}{\epsilon}{q}
        \Leq \BigCombine \Set{ v(\sigma) \mid
            \Ang{q, \epsilon} \PdsTransS{\sigma} \Ang{p, \epsilon},
            \sigma \in \PdsRulesPostP^{*} }
      \]

    \item if the constraint is
      \[
      \PostCon{p'}{\gamma'}{q} \Extend f(r) \Leq \PostCon{p}{\gamma}{q}
      \]
      or
      \[
      \PostCon{q''}{\gamma'}{q} \Extend \PostCon{p'}{\epsilon}{q''}
                           \Extend f(r) \Leq \PostCon{p}{\gamma}{q}
      \]
      where $r$ must be
      $r = \Ang{p', \gamma'} \PdsRule{} \Ang{p, \gamma} \in \PdsRules$.
      The case is analogous to the previous one (we just have $\gamma$ instead
      of $\epsilon$).

    \item if the constraint is one of
      \[
      \One \Leq \PostCon{p}{\gamma_1}{q_{p, \gamma_1}}
      \]
      or
      \[
      \PostCon{q''}{\gamma'}{q} \Extend \PostCon{p'}{\epsilon}{q''}
                           \Extend f(r) \Leq \PostCon{q_{p, \gamma_1}}{\gamma_2}{q}
      \]
      (alternatively without the $\epsilon$-transition:
      \[
      \PostCon{p'}{\gamma'}{q} \Extend f(r) \Leq \PostCon{q_{p, \gamma_1}}{\gamma_2}{q}
      \]
      but we will only consider the former, since it is a bit more complex and
      the proof for the latter is almost the same). \\
      We know that
      $r = \Ang{p', \gamma'} \PdsRule{} \Ang{p, \gamma_1 \gamma_2} \in \PdsRules$
      and so that we have $r_1, r_2 \in \PdsRulesPostP$ such that
      $r_1 = \Ang{p', \gamma'} \PdsRule{} \Ang{q_{p, \gamma_1}, \gamma_2}$
      and
      $r_2 = \Ang{q_{p, \gamma_1}, \epsilon} \PdsRule{} \Ang{p, \gamma_1}$
      with $f(r_1) = f(r)$ and $f(r_2) = \One$.

      For the first trivial inequality the property is clearly preserved. Let us
      focus on the second one. We know that
      \begin{align}
      \PostSolC{_i^{j+1}}{q_{p, \gamma_2}}{\gamma_2}{q}
        & =        \PostSolC{_i^j}{q_{p, \gamma_2}}{\gamma_2}{q} \notag \\
        & \Combine \left( \PostSolC{_i^j}{q'}{\gamma'}{q}
          \Extend  \PostSolC{_i^j}{p'}{\epsilon}{q'}
          \Extend f(r) \right)
        \label{lem:ann/post-precise-1}
      \end{align}
      for some $q' \in Q$.
      Using induction hypothesis we have that:
      \begin{align}
      \PostSolC{_i^j}{q_{p, \gamma_1}}{\gamma_2}{q} & \Leq \BigCombine \Set{
        v(\sigma) \mid    \Ang{q, \epsilon}
        \PdsTransS{\sigma} \Ang{q_{p, \gamma_1}, \gamma_2}
      } \label{lem:ann/post-precise-2} \\
      \PostSolC{_i^j}{q'}{\gamma'}{q} & \Leq \BigCombine \Set{
        v(\sigma) \mid    \Ang{q, \epsilon}
        \PdsTransS{\sigma} \Ang{q', \gamma'}
      } \notag \\
      \PostSolC{_i^j}{p'}{\epsilon}{q'} & \Leq \BigCombine \Set{
        v(\sigma) \mid    \Ang{q', \epsilon}
        \PdsTransS{\sigma} \Ang{p', \epsilon}
      } \notag
      \end{align}
      Using the last two and the fact that the flow algebra is affine, we get the
      following
      \begin{align*}
      & \PostSolC{_i^j}{q'}{\gamma'}{q}
        \Extend \PostSolC{_i^j}{p'}{\epsilon}{q'}
        \Extend f(r) \\
      & \Leq \Set{ v(\sigma_1) \Extend v(\sigma_2) \Extend f(r) \mid
        \Ang{q, \epsilon}  \PdsTransS{\sigma} \Ang{q', \gamma'},
        \Ang{q', \epsilon} \PdsTransS{\sigma} \Ang{p', \epsilon}
      } \\
      & \Leq \Set{ v(\sigma) \Extend f(r_1) \Extend f(r_2) \mid
        \Ang{q, \epsilon} \PdsTransS{\sigma} \Ang{p', \gamma'}
        \PdsTrans{r_1} \Ang{q_{p, \gamma_1}, \gamma_2}
        \PdsTrans{r_2} \Ang{p, \gamma_1 \gamma_2}
      } \\
      & \Leq \Set{ v(\sigma) \mid
        \Ang{q, \epsilon} \PdsTransS{\sigma} \Ang{p, \gamma_1 \gamma_2}
      }
      \end{align*}
      So from this and \eqref{lem:ann/post-precise-1} and
      \eqref{lem:ann/post-precise-2} we have the desired result.
    \end{itemize}
  \end{description}
\end{description}
\qed
\end{proof}

\subsection{Proof of Lem.~\ref{lem:ann/post-completeness-path}}
\label{app:ann/post-completeness-path}

Consider a weighted pushdown system $\WPds = (\Pds, \FlowAlgebra, f)$ where
$\FlowAlgebra$ is affine and a $\APostStarC$ automaton created by the saturation
procedure.
For every path $\rho = q' \AutoTransR{s} q$
($s \in \Gamma^{*}$) in this automaton we have that
\[
\PostSolP{q}{s}{\rho}{q'} \Leq \BigCombine \Set{ v(\sigma) \mid
  \Ang{q', \epsilon} \PdsTransS{\sigma} \Ang{q, s}, \sigma \in \PdsRulesPostP^{*} }
\]

\begin{proof}
The proof will proceed with the induction on the number of transitions in
$\rho$ (we will use the inductive definition of $\PostSolSymb$).
\begin{description}
\item[$|\rho| = 1$] According to the definition of $\PostSolSymb$ we have
  \[
  \PostSolP{q}{s}{\rho}{q'} = \PostSol{q}{\gamma_\epsilon}{q'}
  \]
  The result follows from Lemma \ref{lem:ann/post-completeness-transition}.
\item[$|\rho| > 1$] Again using the definition of $\PostSolSymbP$ we have
  \[
  \PostSolP{q}{s}{\rho}{q'} =
    \PostSolP{q''}{s'}{\rho'}{q'} \Extend \PostSol{q}{\gamma_\epsilon}{q''}
  \]
  where $s = \gamma_\epsilon s'$, $q'' \in Q$, and
  \[
  \rho = \underbrace{q' \AutoTransRS{s'} q''}_{\rho'} \AutoTransR{\gamma_\epsilon} q
  \]
  Now we can use the Lemma \ref{lem:ann/post-completeness-transition} along with
  the induction hypothesis (since $|\rho| > |\rho'|$) to get:
  \begin{align*}
  \PostSol{q}{\gamma_\epsilon}{q''}
    & \Leq
    \BigCombine \Set{ v(\sigma) \mid
      \Ang{q'', \epsilon} \PdsTransS{\sigma} \Ang{q, \gamma_\epsilon}, \sigma
      \in \PdsRulesPostP
    } \\
  \PostSolP{q''}{s'}{\rho'}{q'}
    & \Leq
    \BigCombine \Set{ v(\sigma) \mid
      \Ang{q', \epsilon} \PdsTransS{\sigma} \Ang{q'', s'}, \sigma \in
      \PdsRulesPostP
    }
  \end{align*}
  Finally, we use the fact that the flow algebra is affine:
  \begin{align*}
  & \PostSolP{q}{s}{\rho}{q'} \\
    & \Leq
    \BigCombine \Set{ v(\sigma) \Extend v(\sigma') \mid
      \Ang{q', \epsilon} \PdsTransS{\sigma'} \Ang{q'', s'},
      \Ang{q'', \epsilon} \PdsTransS{\sigma} \Ang{q, \gamma_\epsilon},
      \sigma, \sigma' \in \PdsRulesPostP
    } \\
    & \Leq
    \BigCombine \Set{ v(\sigma) \Extend v(\sigma') \mid
      \Ang{q', \epsilon} \PdsTransS{\sigma} \Ang{q, s}, \sigma \in
      \PdsRulesPostP
    }
  \end{align*}
\end{description}
\qed
\end{proof}

\subsection{Proof of Thm.~\ref{thm:ann/post-completeness}}
\label{app:ann/post-completeness}

Consider an automaton $\APostStarC$ constructed by the saturation procedure and
the least solution $\PostSolSymb$ to the set of its constraints $\Constraints$.
If the flow algebra is affine then for every path $\rho = q_f \AutoTransRS{s} p$
where $q_f \in F$ we have that
\[
\PostSolP{p}{s}{\rho}{q_f} = \BigCombine \Set{ v(\sigma) \mid
  \Ang{q_f, \epsilon} \PdsTransS{\sigma} \Ang{p, s}, \sigma \in \PdsRulesPostP^{*}
}
\]

\begin{proof}
Follows directly from Theorem~\ref{thm:ann/post-soundness}
and Lemma~\ref{lem:ann/post-completeness-path}.
\qed
\end{proof}

\end{document}